%
%
%
%
%
%
\RequirePackage{fix-cm}
\documentclass[smallextended]{svjour3_modified}       
\smartqed  
\usepackage{graphicx}
\usepackage[a4paper,left=2.5cm,right=2.5cm]{geometry}

%
%
\usepackage{epsfig,graphicx,color,gastex,subfig}

\usepackage[cmex10]{amsmath}
\usepackage{amsfonts}
\usepackage[normalem]{ulem} 
\usepackage{enumerate}

\usepackage{tikz,gnuplot-lua-tikz}
\usetikzlibrary{shadows,shapes,decorations,arrows}
\tikzstyle{pici}=[circle,draw,auto=left, minimum width=4pt, circular drop shadow, fill=white]
\tikzstyle{pont}=[circle,draw,auto=left, minimum width=15, inner sep=0pt, circular drop shadow, fill=white]
\tikzstyle{legend} = [rectangle, text width=7em, text centered]
%
\newcommand\bc{{\mathbf c}}%
\newcommand\bv{{\mathbf v}}%
\newcommand\by{{\mathbf y}}%
\newcommand\bx{{\mathbf x}}%
\newcommand\be{{\mathbf e}}%
\newcommand\bM{{\mathbf M}}%
\def\dim{{\mathrm{dim}}}%
\newcommand\bz{{\mathbf z}}%
\newcommand\bb{{\mathbf b}}%
\newcommand{\uj}{} %
\newcommand{\regi}[1]{}
\def\ssqed{\nobreak\hfill {$\scriptscriptstyle\Box$}}%

\newtheorem{cl}[theorem]{ Claim}
\newtheorem{lem}[theorem]{ Lemma}
\newtheorem{cor}[theorem]{ Corollary}
\newtheorem{prop}[theorem]{ Proposition}
\newtheorem{defi}[theorem]{ Definition}
%
%
\begin{document}

\title{Network Coding Algorithms for Multi-Layered Video Broadcast
}


\author{Erika R. B\'erczi-Kov\'acs   \and
        Zolt\'an Kir\'aly
}


\institute{
Erika R. B\'erczi-Kov\'acs \at
Department of Operations Research, E\"otv\"os Lor\'and University\\
Budapest, Hungary\\
\email{koverika@cs.elte.hu}           
  \and
Zolt\'an Kir\'aly \at
Department of Computer Science, E\"otv\"os Lor\'and University\\
and MTA-ELTE Egerv\'ary Research Group, Budapest, Hungary\\
\email{kiraly@cs.elte.hu}           
}

\date{
}

\maketitle

\begin{abstract}
  In this paper we give network coding algorithms for multi-layered
  video streaming. The problem is motivated by video broadcasting  in a communication network
  to users with varying demands.
  We give a polynomial time algorithm for deciding feasibility for the case of two layers, 
  and show that the problem becomes NP-hard if the task is
  to maximize the number of satisfied demands.
  For the case of three layers we also show NP-hardness of the problem.
  Finally, we propose a heuristic for three layers and give experimental comparison with previous approaches.
\keywords{network coding \and multi-layered video streaming }
\end{abstract}

\section{Introduction}

The appearance of new devices (smartphones, tablets, etc.) has highly
increased user diversity in communication networks. As a consequence, when watching a video stream,
users may have very different quality demands depending on the resolution capability of their devices.

Multi-resolution code (MRC) is one successful way to handle this diversity,
encoding data into a base layer and one or more refinement layers
\cite{mrc1,mrc2}.
Receivers can request cumulative layers, and the
decoding of a higher layer always requires the correct reception of
all lower layers (including the base layer).  The multi-layer multicast
problem is to multicast as many valuable layers to as many receivers
as possible.

In a multi-layered streaming setup, network coding was shown to be a successful tool 
for increasing throughput compared to simple routing \cite{kim}.
In their simple heuristic, Kim et al.  \cite{kim} give a network coding scheme
based on restricting the set of layers that may be encoded at certain nodes.

This paper proposes algorithms for the multi-layered video streaming problem.
We give an optimal polynomial time \regi{optimal }algorithm for two layers when the goal is to send the base layer to every user, and within this constraint to maximize the
number of users receiving two layers.
For the case of three or more layers we show NP-hardness of the problem. 
Also, we show NP-hardness for the case of two layers 
when the goal is to maximize the total number of transmitted useful layers.
We also propose a heuristic for three layers and give experimental comparison  between the best
known heuristic due to Kim et al.\cite{kim} and our approach.

The rest of the paper is organized as follows: Section \ref{pd} contains the
problem formulation and some definitions.  In Section
\ref{complexity} we prove NP-hardness for some special cases of the
problem.  In Section \ref{limit}, the notion of feasible height function is introduced,
 and a sufficient condition for simultaneously satisfiable receiver demands is given.
In Section \ref{two} we give an optimal algorithm for two layers.
In Section \ref{sect_three} we present a heuristic for three
layers, and give some
numerical results of experimental comparison.
\section{Problem Formulation}\label{pd}

A video stream is divided into $k$ layers of equal size. At each time slot $t$,
a set of messages $\bM(t)=\{M_1(t), M_2(t), \ldots M_k(t)\}$ is generated, message $M_i(t)$
belonging to layer $i$ and represented by an element of a finite field
$\mathbb{F}_q$ of size $q$.
The idea of network coding is to transmit linear combinations of messages on
the unit rate arcs. We will construct such linear network coding schemes,
where these linear combinations only combine messages from the same time slot,
and the same coding scheme is applied for each message set $\bM(t)$,
so the notation '$t$' will be omitted in the paper.

Let $D=(V,A)$ be a directed acyclic graph with a single
source node $s$ and with unit capacity arcs. We will consider this graph
fixed, except in Section \ref{complexity}.
For a node $v\in V\!-\!s$, let \textbf{$\lambda(s,v)$} denote the maximal number of arc-disjoint paths from $s$ to $v$. 
For a pair of nodes $u,v\in V$, a set $X\subset V$ is a $\overline{u}v$-set, if $u \notin X$ and $v \in X$.
For a set $X$ of nodes let \textbf{$\varrho(X)$} denote the number of entering arcs of $X$. 
Note that $\lambda(s,v)\geq k$ if and only if $\varrho(X)\geq k$ for every $\overline{s}v$-set $X$.
We assume that $\lambda(s, v)\geq 1$ for every node in the graph.
A set $X$ not containing $s$, and having $\varrho(X)=i$ is called an \textbf{$i$-set}.

The task is to multicast $\bM=(M_1,M_2,\ldots M_k)$
 from $s$.
 A network code can be represented by the vector of the
coefficients $\bc=(c_1, \ldots, c_k)$ on each arc (where $c_i\in \mathbb{F}_q$).  Let
$\mathbb{F}_q^k$ denote the $k$-dimensional vector space over $\mathbb{F}_q$,
and let $\be_i$ denote the $i$th unit vector. For a set $S\subseteq
\mathbb{F}_q^k$ of vectors, let $\langle S\rangle$ denote the linear subspace
spanned by $S$.
\begin{defi}
  A \textbf{linear network code} of $k$ messages over a finite field $\mathbb{F}_q$
  is a mapping $\bc: A\to
  \mathbb{F}_q^k$ which fulfills the \textbf{linear combination
    property}: $\bc(uv) \in \langle \{\bc(wu) |\; wu\in A\}\rangle $
  for all $u\neq s$.  We will use the notation $\langle \bc, u \rangle
  =\langle \{\bc(wu) |\; wu\in A\}\rangle$.  The function $\bc$
  \regi{denotes} \uj{gives} the \textbf{coefficients} of the messages on an arc, that is, on
  arc $a$ the message sent is the scalar product $\bc(a)\cdot\bM$. We say that a message $M_i$
  (or layer $i$) has \textbf{non-zero coefficient} on an arc $a$,
  if $\bc(a)\cdot \be_i \neq 0$. A node $v$ \textbf{can decode} message $M_i$ (or layer $i$),
  if $\be_i\in \langle \bc, v \rangle $.
  Hence, with abuse of notation, $\be_i$  will be identified with message $M_i$ and layer $i$
  throughout the paper.
\end{defi}
We remark here that if a node $v$ can decode layer $i$ by the above definition,
then it really can decode message $M_i$,
as it gets all scalar products $\bc(uv)\cdot
\bM$ and $\be_i\in \langle \bc, v \rangle $, so it can easily calculate
$M_i=\be_i\cdot\bM$.
Note also that simple routing can be regarded as a special case, where
for each arc $uv$, $\;\bc(uv)=\be_i$ for some $1\leq i \leq k$.
%
\begin{defi}\label{basics}
  In multi-resolution coding, for $i>j$ we say that layer $i$ is \textbf{higher} than layer $j$,
  and layer $j$ is \textbf{lower} than layer $i$.
  The \textbf{height} of a network code on an arc $uv$ is the highest layer
  with non-zero coefficient on that arc. For example, the first unit vector has
  height one and so on, $\be_i$ has height $i$, and vector $(1,0,1,0)$ has
  height 3. The height of $\bc$ is denoted by $h_{\bc} : A \to \mathbb{N}$.

A layer $i$ is \textbf{valuable} for a node only if all lower layers can
  also be decoded at that node, i.e., for every $j\le i$ message $M_j$ is
  decodable.
  The \textbf{performance} of a network code at a node $v$ is
  the index of the highest valuable layer for $v$. The performance function of
  $\bc$
  is denoted by $p_{\bc}: V\to \{0,1,\ldots,k\}$, where $p(v)=0$ denotes that
  layer $1$
  is not decodable at $v$.

  A \textbf{demand} is a sequence of mutually disjoint subsets of $V\!-\!s$
  denoted by $\tau =(T_1,T_2,\ldots,T_k)$.
  The set of \textbf{receiver nodes} is the union of these request sets, denoted by
  $T=T_1\cup T_2\cup\ldots \cup T_k$.
The nodes in $T_i$ \textbf{request} the first $i$ layers.
  Given a demand $\tau$, we can define a \textbf{demand function} $d_\tau: V\to\{0,1,\ldots,k\}$
  on the nodes in a straightforward way by setting $d_\tau(v)=i$ if $v\in T_i$,
  and $d_\tau(v)=0$ if $v \in V\setminus T$.

  A network code is \textbf{feasible} for demand $\tau$, 
  if $p_{\bc}(v)\geq d_\tau(v)$ for all $v \in V$, that is, for all $i$ and $j\leq i$, every receiver node $t\in T_i$ can decode $M_j$.
  If there exists a feasible network code for a demand,
  the demand is called \textbf{satisfiable}.
\end{defi}

\section{Complexity Results}\label{complexity}

In this section we prove NP-hardness of some special cases of the multi-layered network coding problem.
Lehman and Lehman \cite{lehman} showed NP-hardness for a more general network coding problem,
where receivers may demand any subset of the messages, and there can be multiple sources, accessing disjoint subsets of the information demanded by the receivers. 
Here we \regi{need and prove} \uj{prove} NP-hardness for a special case of this problem, 
when there is only one source in the graph,
and demands of the receivers have a layered structure as defined in the previous section \ref{basics}.

\begin{theorem}
  Given a directed acyclic graph $D$ and a demand with three layers $\tau=(T_1, \emptyset, T_3)$, it is
  NP-hard to decide, whether there exists a feasible network code for $\tau$.
\end{theorem}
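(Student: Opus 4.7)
The plan is to prove NP-hardness by polynomial reduction from 3-SAT (alternatively, from a suitably restricted variant of the hardness instance of Lehman and Lehman). Given a 3-SAT formula $\phi$ on variables $x_1,\dots,x_n$ with clauses $C_1,\dots,C_m$, I would construct a DAG $D$ with a single source $s$ emitting three messages $M_1,M_2,M_3$, together with receiver sets $T_1$ and $T_3$, so that the demand $\tau=(T_1,\emptyset,T_3)$ is feasible exactly when $\phi$ is satisfiable.

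The construction relies on two kinds of gadgets. A \emph{variable gadget} for each $x_i$ would output copies of a single bottleneck arc; strategically placed $T_1$ receivers and limited-capacity paths from $s$ would force the coefficient vector on this arc to take one of two prescribed values in $\mathbb{F}_q^3$, encoding $x_i=\text{true}$ or $x_i=\text{false}$. A \emph{clause gadget} for each $C_j$ would collect one copy of each of its three literal encodings, combine them at an internal hub node, and feed the hub into a receiver in $T_3$. Choosing the ``true'' and ``false'' encoding vectors appropriately (for instance, with the ``false'' vectors spanning only a two-dimensional subspace missing $\be_3$, and each corresponding ``true'' vector obtained by adding $\be_3$), one ensures that the subspace $\langle \bc, v\rangle$ at the $T_3$ receiver contains $\be_3$ (hence also $\be_1$ and $\be_2$) if and only if at least one of the three literals of $C_j$ is set to true.

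The main obstacle lies in (i) rigidly enforcing the binary choice inside each variable gadget, and (ii) controlling the freedom granted by the silent middle layer. Because nobody requests $M_2$ explicitly, any intermediate arc may carry arbitrary multiples of $\be_2$, giving the coder considerable algebraic latitude that must not be exploitable to realise a feasible code not corresponding to any satisfying assignment. Overcoming this requires a careful arrangement of additional $T_1$ receivers to pin down the first coordinate of the coefficient vectors on critical arcs, and working over a sufficiently large finite field $\mathbb{F}_q$ so that any such ``cheating'' code would violate one of these constraints. Once the gadgets are arranged so that the linear combination property of Definition 1 together with the receiver constraints leave only the intended Boolean-type codes as globally feasible, the equivalence between feasibility of $\tau$ and satisfiability of $\phi$ follows, establishing NP-hardness.
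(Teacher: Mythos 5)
Your high-level strategy coincides with the paper's: a reduction from 3-SAT with variable gadgets policed by $T_1$ receivers and one $T_3$ receiver per clause. But the proposal stops exactly where the proof has to start, and the two concrete mechanisms you do commit to are problematic. First, you want the variable gadget to \emph{force the coefficient vector on a bottleneck arc to take one of two prescribed values} in $\mathbb{F}_q^3$. Pinning an arc's code to two exact vectors is far more than the network-coding constraints can enforce (any nonzero scalar multiple, and any perturbation invisible to the downstream receivers, is equally feasible), and it is also more than is needed. The paper's variable gadget enforces only a one-sided, scale-invariant condition: it routes three $T_1$ receivers $a_i,b_i,c_i$ and one $T_3$ receiver $d_i$ through the two literal arcs $sx_i$, $s\overline{x}_i$ so that if \emph{both} literal arcs had height $\ge 2$, then decodability of layer $1$ at $a_i,c_i$ forces $\bc(sa_i),\bc(sc_i)\in\langle \be_1,\bc(sx_i),\bc(s\overline{x}_i)\rangle=\langle \bc(sx_i),\bc(s\overline{x}_i)\rangle$, a $\le 2$-dimensional space, so $d_i$ cannot receive three layers. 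This dimension-counting argument is the missing idea that "rigidly enforces the binary choice" and simultaneously disposes of the "algebraic latitude" of the silent middle layer that you correctly flag as the main obstacle but do not resolve; you cannot defer it to "a careful arrangement of additional $T_1$ receivers" because finding that arrangement \emph{is} the proof.

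Second, your clause gadget combines the three literal encodings "at an internal hub node" and feeds the hub into a $T_3$ receiver. If the hub reaches the receiver by a single unit-capacity arc, the receiver sees a one-dimensional subspace and can never decode three layers, regardless of the assignment. The receiver must itself have at least three incoming arcs; the paper makes the clause node the receiver, with incoming arcs $sC_j$ (unconstrained), $tC_j$ (forced to height one because $t\in T_1$ has the single incoming arc $st$), and one arc from each literal node, so that three layers are decodable iff some literal arc has height greater than one. You would also need to supply the constructive direction (explicit codes realizing a satisfying assignment, e.g.\ true literals carrying $(1,1,0)$ and $sC_j$ carrying $(1,1,1)$), which the proposal gestures at but does not give. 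As it stands the proposal is a plausible research plan rather than a proof.
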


\begin{figure}[!ht]
\centering
\includegraphics[width=4in]{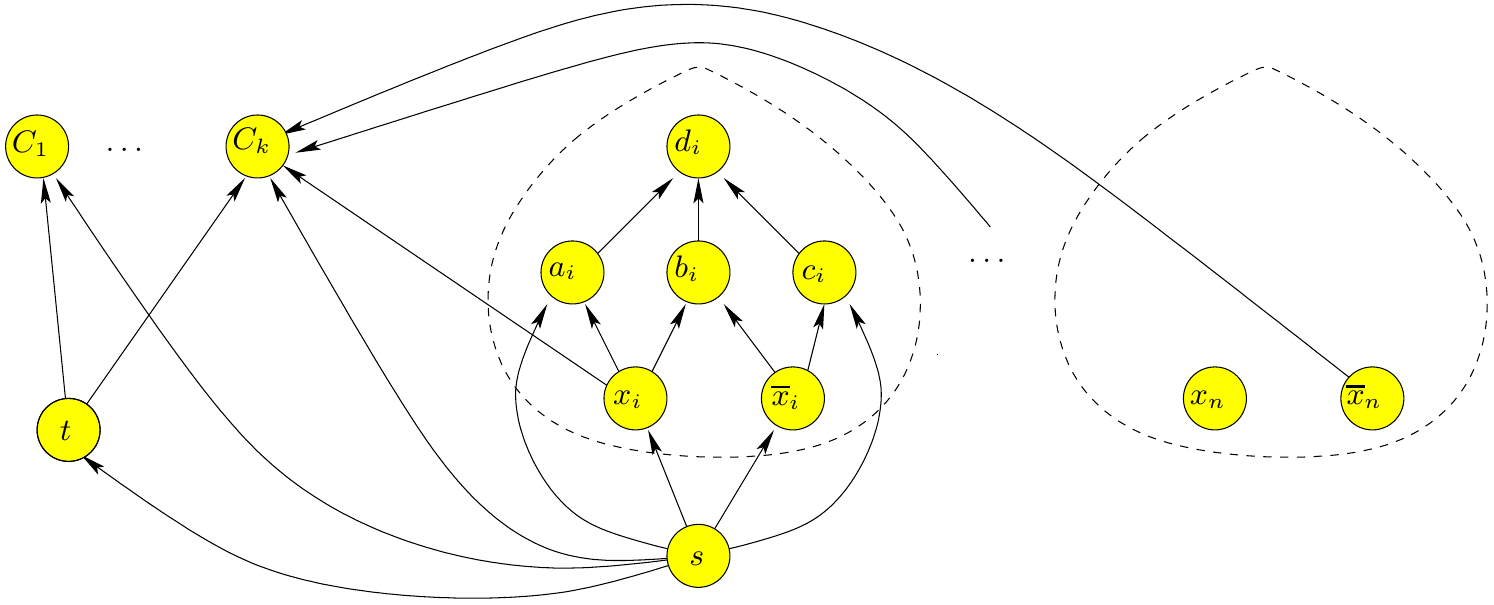}
\caption{Reduction of 3-SAT to demand $\tau=(T_1, \emptyset, T_3)$.}
\label{fig_np_hard}
\end{figure}

\begin{proof}
  We reduce the well-known NP-complete 3-SAT problem \cite{gj} to our problem. Let $S=(X, CL)$ be a 3-SAT instance, where
  $X=\{x_1, \ldots, x_n\}$ and $CL=\{C_1, \ldots, C_m\}$ denote the set of
  variables and clauses, respectively. We define a network coding problem 
  on a digraph $D$ corresponding to this instance. 
  First we create special nodes $s, t$ with an arc $st$, and put $t$ into $T_1$.  
  For each variable $x_i$ we add six
  nodes with eleven arcs (see Figure \ref{fig_np_hard}),
  so that $a_i, b_i, c_i \in T_1$ and $d_i \in T_3$. Nodes $x_i$ and
  $\overline{x}_i$ correspond to literals. 
  For each clause $C_j$ we add a node $C_j$, arcs $sC_j$ and
  $tC_j$ and arcs from every node corresponding to literals of $C_j$. Each
  $C_j$ is put into $T_3$.  We prove that this network coding problem has a
  feasible solution over some finite field if and only if $S$ can be
  satisfied.
  Suppose the above defined network coding problem has a feasible network code $\bc$. 
  Since $t\in T_1$,  $h_{\bc}(st)=1$ and for all $C_j$ $\;
  h_{\bc}(tC_j)=1$. Moreover, the arc $sC_j$ can
  transmit any message from $s$, hence $C_j$ can decode all three layers if
  and only if at least one additional arc entering $C_j$ has height greater than one.
  Note that such an arc can come only from a node corresponding to a literal in $C_j$.
  
\begin{cl}
  If the network coding problem has a feasible network code $\bc$, then for every variable $x_i$, the
  code $\bc$ has height one on at least one of the arcs $s x_i$ and $s
  \overline{x}_i$.
\end{cl}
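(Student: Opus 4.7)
The plan is to argue by contradiction. Suppose there is a variable $x_i$ for which the feasible code $\bc$ assigns a vector of height at least two to \emph{both} $sx_i$ and $s\overline{x}_i$. I will derive a contradiction by exhibiting some node in $T_1$ inside the variable gadget that cannot decode $\be_1$.

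The first ingredient is a direct consequence of the linear combination property: the nodes $x_i$ and $\overline{x}_i$ have a single incoming arc each (namely $sx_i$ and $s\overline{x}_i$), so on every outgoing arc of $x_i$ the coding vector is a scalar multiple of $\bc(sx_i)$, and likewise for $\overline{x}_i$. In particular, for every arc leaving $x_i$ either the coding vector is $\mathbf{0}$ or its height equals $h_{\bc}(sx_i)\ge 2$ (and symmetrically for $\overline{x}_i$).

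Next I would read off the in-neighborhoods of $a_i$, $b_i$, $c_i$ from the gadget in Figure \ref{fig_np_hard}. The gadget is drawn so that $a_i$ receives its information only from $x_i$ and $c_i$ receives its information only from $\overline{x}_i$ (the node $b_i$ sits between, receiving from both, and $d_i\in T_3$ needs full rank). Consequently $\langle\bc,a_i\rangle$ is contained in $\langle\bc(sx_i)\rangle$, a one-dimensional subspace, and every non-zero element of this subspace has height $\ge 2$. Thus $\be_1\notin\langle\bc,a_i\rangle$, contradicting $a_i\in T_1$. The same argument works for $c_i$ if one prefers to start from the assumption on $s\overline{x}_i$.

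The main obstacle is purely book-keeping: verifying from the figure that indeed at least one of $a_i,b_i,c_i$ is separated from one of the two literal nodes, so that its incoming vectors all lie in a one-dimensional subspace spanned by a height-$\ge 2$ vector. Once this separation is observed the algebraic step is trivial, since a one-dimensional subspace that contains a vector of height $\ge 2$ can never contain $\be_1$.
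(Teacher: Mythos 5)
There is a genuine gap here, and it is not just book-keeping: your key structural premise about the gadget is false, and the contradiction you derive cannot be the right one. In the gadget, $a_i$ does \emph{not} receive its information only from $x_i$: it has a second incoming arc directly from $s$ (the paper's feasibility construction explicitly sets $\bc(sa_i)=(0,1,0)$), and symmetrically $c_i$ has the arc $sc_i$. So $\langle\bc,a_i\rangle\subseteq\langle\bc(sa_i),\bc(sx_i)\rangle$ is two-dimensional and can perfectly well contain $\be_1$ even when $\bc(sx_i)$ has height $\ge 2$; the paper's own feasible code ($\bc(sa_i)=(0,1,0)$, $\bc(sx_i)=(1,1,0)$) is a direct counterexample to your claimed conclusion $\be_1\notin\langle\bc,a_i\rangle$. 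A sanity check that something must be wrong: your argument only uses the assumption on \emph{one} of the two literal arcs, so if it were valid it would force \emph{every} arc $sx_i$ and $s\overline{x}_i$ to have height one, which would make it impossible for any clause node to ever decode three layers and would destroy the reduction entirely.

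The missing idea is that the contradiction has to come from $d_i\in T_3$, which you mention but never use. Under the indirect assumption that neither literal arc has height one, the constraints from $T_1$ give: $\be_1\in\langle\bc(sx_i),\bc(s\overline{x}_i)\rangle$ (from $b_i$, whose only in-neighbours are the two literal nodes), $\bc(sa_i)\in\langle\be_1,\bc(sx_i)\rangle$ (from $a_i$), and $\bc(sc_i)\in\langle\be_1,\bc(s\overline{x}_i)\rangle$ (from $c_i$). Everything reaching $d_i$ through $a_i,b_i,c_i$ therefore lies in $\langle\bc(sa_i),\bc(sc_i),\bc(sx_i),\bc(s\overline{x}_i)\rangle=\langle\bc(sx_i),\bc(s\overline{x}_i)\rangle$, which has dimension at most $2$ --- contradicting the requirement that $d_i$ decode all three layers. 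None of the individual $T_1$ constraints is violated by your hypothesis; only their combination with the rank-three demand at $d_i$ is.
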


\begin{proof}
  Let us assume indirectly that neither $\bc(sx_i)$ nor
  $\bc(s\overline{x}_i)$ have height one. Since $a_i,b_i,c_i$ must be able to
  decode the first layer, $\bc(sa_i)\in \langle \be_1, \bc(sx_i)\rangle $,
  and $\bc(sc_i)\in \langle \be_1, \bc(s\overline{x}_i)\rangle $, and $\be_1 \in
  \langle \bc(sx_i), \bc(s\overline{x}_i) \rangle$. Hence we have
\[\dim\langle
  \bc(sa_i), \bc(sc_i), \bc(sx_i), \bc(s\overline{x}_i)\rangle =\dim\langle
  \bc(sx_i), \bc(s\overline{x}_i)\rangle \leq 2,
\]
that is, these four vectors  cannot span a 3-dimensional space to transmit
three layers to $d_i$.\ssqed
\end{proof}

 By the claim we can transform a solution of the network coding problem into an assignment of $S$ by
assigning value 'true' to a literal $l$ if the height of $\bc(s l)$ is at
least two. Note that if for a variable $x_i$ both $s x_i$ and $s
\overline{x}_i$ have height one, we can choose the value of $x_i$ arbitrarily
to get a satisfying assignment.

Similarly we can get a feasible network code $\bc$ for the network coding problem
 from a truth assignment of $S$ over any field. The corresponding
vectors $\bc(e)$ are the following.  Let $\bc(st)=(1,0,0)$,
$\bc(sC_j)=(1,1,1)$, and for any node $u$ with only one incoming arc $wu$, all
outgoing arcs carry $\bc(wu)$.  If $x_i$ is true, then $\bc(sa_i)=(0,1,0),\;
\bc(sx_i)=(1,1,0),\; \bc(s \overline{x}_i)=(1,0,0),\; \bc(sc_i)=(1,1,1),\;
\bc(a_id_i)=(0,1,0),\; \bc(b_id_i)=(1,0,0),\; \bc(c_id_i)=(1,1,1)$, and the
code can be constructed symmetrically if $x_i$ is false.  It is easy to check
that this $\bc$ is indeed a feasible network code.\qed
\end{proof}

For the general case with $k\ge 3$ layers we easily get the similar result
by adding $k-3$ new  $s C_i$ and $s d_j$ arcs for each $i,j$.
\begin{cor}
  For  $k\ge 3$ layers and demand $\tau=(T_1, \emptyset,
  \ldots, \emptyset, T_k)$, it is
  NP-hard to decide whether there exists a feasible network code for $\tau$.
\end{cor}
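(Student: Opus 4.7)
The plan is to reduce the NP-hard case $k=3$ from the preceding theorem to arbitrary $k\ge 3$ by a simple padding of the reduction. Starting from the digraph $D$ and demand $(T_1,\emptyset,T_3)$ produced by the theorem's reduction from a 3-SAT instance, I form $D'$ by adding, for each $v\in T_3$, exactly $k-3$ parallel arcs from $s$ to $v$, and I set the new demand to $\tau'=(T_1,\emptyset,\ldots,\emptyset,T_k)$ with $T_k:=T_3$. It then suffices to prove that the $k$-layer instance $(D',\tau')$ is feasible if and only if the 3-SAT instance is satisfiable.

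The forward direction is routine. Given a satisfying assignment, take the 3-layer code $\bc$ constructed in the theorem, embed every arc-vector into $\mathbb{F}_q^k$ by zero-padding the last $k-3$ coordinates, and assign the $i$-th new arc from $s$ to each $v\in T_k$ the unit vector $\be_{3+i}$ for $i=1,\ldots,k-3$. The linear combination property is inherited from $\bc$ on the original arcs and is vacuous on the new arcs (whose tail is $s$). Every receiver in $T_1$ still decodes $\be_1$, and every $v\in T_k$ picks up $\be_1,\be_2,\be_3$ from the padded original incoming arcs and $\be_4,\ldots,\be_k$ from the new arcs, so $\langle \bc',v\rangle=\mathbb{F}_q^k$.

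The backward direction is the main obstacle. A naive coordinate projection $\mathbb{F}_q^k\to\mathbb{F}_q^3$ does not preserve feasibility, since the $k-3$ new arcs may carry vectors that mix the lower and upper layers arbitrarily; I therefore re-run the theorem's claim directly inside $\mathbb{F}_q^k$. Given a feasible $k$-layer code $\bc'$, I aim to show (i) for every variable $x_i$, at least one of $\bc'(sx_i),\bc'(s\overline{x}_i)$ lies in $\langle\be_1\rangle$, and (ii) for every clause $C_j$, at least one literal $l\in C_j$ satisfies $\bc'(sl)\notin\langle\be_1\rangle$. For (i), if both $\bc'(sx_i)$ and $\bc'(s\overline{x}_i)$ were outside $\langle\be_1\rangle$, then the argument of the theorem's claim shows that the three original incoming vectors at $d_i$ all lie in the $2$-dimensional subspace $\langle\bc'(sx_i),\bc'(s\overline{x}_i)\rangle$; combined with the $k-3$ new arcs entering $d_i$, this gives $\dim\langle\bc',d_i\rangle\le 2+(k-3)=k-1<k$, contradicting $d_i\in T_k$. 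For (ii), if all three literal source-arcs of $C_j$ were in $\langle\be_1\rangle$, then using $\bc'(tC_j)\in\langle\bc'(st)\rangle=\langle\be_1\rangle$, every incoming arc at $C_j$ except $\bc'(sC_j)$ and the $k-3$ new $sC_j$ arcs would be a multiple of $\be_1$, yielding $\dim\langle\bc',C_j\rangle\le 1+1+(k-3)=k-1<k$, again a contradiction.

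With (i) and (ii) in hand, the truth assignment $x_i:=\mathrm{true}$ iff $\bc'(sx_i)\notin\langle\be_1\rangle$ (chosen arbitrarily when both of $\bc'(sx_i),\bc'(s\overline{x}_i)$ lie in $\langle\be_1\rangle$) is consistent by (i) and satisfies every clause by (ii). Since $D'$ is built in polynomial time from the 3-SAT instance (for fixed $k$), this yields NP-hardness for every $k\ge 3$.
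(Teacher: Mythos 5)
Your proposal is correct and takes essentially the same approach as the paper, which disposes of the corollary in one sentence by adding $k-3$ new $sC_j$ and $sd_i$ arcs (i.e., $k-3$ parallel $s$-arcs to each node of $T_3$) and reusing the three-layer reduction. Your write-up merely fills in the dimension-counting verification that the paper leaves implicit, and it does so correctly.
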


\begin{theorem}
  Given a directed acyclic graph $D$ and a demand $\tau=(T_1, T_2)$ it is NP-hard to find a
  maximal cardinality subset $T_1'$ of $T_1$, so that for $\tau'=(T_1', T_2)$
  there exists a feasible network code.
\end{theorem}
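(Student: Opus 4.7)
The plan is to reduce from the NP-hard \emph{Maximum Independent Set} problem. Given a graph $G=(V,E)$, I would build a DAG $D$ with source $s$, two auxiliary nodes $x_v,t_v$ for every $v\in V$ together with arcs $sx_v$ and $x_v t_v$, and one node $u_e$ per edge $e=vw\in E$ with arcs $x_v u_e$ and $x_w u_e$. Set $T_1=\{t_v : v\in V\}$ and $T_2=\{u_e : e\in E\}$. The intuition is that each $u_e\in T_2$ forces $\bc(sx_v)$ and $\bc(sx_w)$ to be linearly independent in $\mathbb{F}_q^2$, while $t_v\in T_1'$ forces $\bc(sx_v)$ to be a nonzero multiple of $\be_1$; these conditions are jointly satisfiable exactly when $\{v:t_v\in T_1'\}$ is independent in $G$.

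The key claim to prove is that $(T_1',T_2)$ is satisfiable if and only if $I:=\{v:t_v\in T_1'\}$ is an independent set of $G$. For the forward direction, since $t_v$ has the single in-arc $x_v t_v$ and $x_v$ in turn has only $sx_v$ as an in-arc, the linear combination property gives $\langle \bc,t_v\rangle\subseteq\langle \bc(sx_v)\rangle$, so decoding $\be_1$ at $t_v$ forces $\bc(sx_v)$ to be a nonzero multiple of $\be_1$. For any edge $e=vw$ with $v,w\in I$, both $\bc(x_v u_e)$ and $\bc(x_w u_e)$ then lie in $\langle \be_1\rangle$, so $u_e$ can decode at most one independent message, contradicting $u_e\in T_2$.

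For the converse, given an independent set $I$, set $\bc(sx_v)=\be_1$ for $v\in I$ and, working over an $\mathbb{F}_q$ whose plane $\mathbb{F}_q^2$ contains more than $\chi(G)$ distinct one-dimensional subspaces, assign to each $v\notin I$ a one-dimensional subspace distinct from $\langle \be_1\rangle$ coming from a proper coloring of $G[V\setminus I]$; this choice also separates the colors across edges between $I$ and $V\setminus I$. Forwarding $\bc(sx_v)$ along every outgoing arc of $x_v$, each $u_e$ then receives two linearly independent vectors and each $t_v$ with $v\in I$ receives a nonzero multiple of $\be_1$, so the code is feasible. Consequently $\max|T_1'|=\alpha(G)$, and deciding $|T_1'|\ge k$ is at least as hard as deciding $\alpha(G)\ge k$, hence NP-hard. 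The main obstacle I expect is formalizing this converse cleanly: picking $q$ so that enough one-dimensional subspaces are available and verifying that the extension of the labels from the source arcs $sx_v$ to the rest of $D$ gives a bona fide linear network code meeting every demand in $T_1'\cup T_2$.
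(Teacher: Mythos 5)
Your proof is correct and is essentially the paper's own reduction: the paper reduces from Vertex Cover using nodes $t_w\in T_1$ with arcs $st_w$ and edge-receivers $t_{uv}\in T_2$, which is the complement view of your Independent Set formulation, and your extra splitting of each vertex gadget into $x_v$ and $t_v$ changes nothing substantive. The converse direction is also handled the same way (the paper picks pairwise linearly independent source arcs over a field with $q\ge |W|$, which subsumes your coloring refinement).
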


\begin{proof}
  We prove the theorem by reducing to this problem the NP-hard Vertex Cover problem  \cite{gj}, 
  which is the following: given a graph $G=(W,E)$, 
  find a subset of the nodes $X\subseteq W$ of minimum size such that $X\cap \{u,v\}\neq \emptyset$ for every $uv \in E$. 
  Given an instance $G=(W,E)$ of the vertex cover problem, we construct an acyclic graph $D$ and demand $(T_1, T_2)$. 
  First let $D$ be a single node $s$. Then for every vertex $w\in W$ we add a
  receiver $t_w\in T_1$ with an arc $st_w$, while for every edge $uv\in E$ we
  add a receiver $t_{uv}\in T_2$ with arcs $t_ut_{uv}$ and $t_vt_{uv}$.  For a
  given network code $\bc$, a receiver node $t_w\in T_1$ can decode the first layer
  if and only if the height of the code on $st_w$ is one. A receiver node
  $t_{uv}\in T_2$ can decode both layers, only if on at least one entering arc the code
  has height two. Let $t_vt_{uv}$ be such an arc.
  Since the tail node $t_v$ has exactly one entering arc $st_v$, this arc must have height two also, 
  hence $t_v$ does not receive the first layer in $D$.
  Let $T_1'\subseteq T_1$ denote the set of nodes $t_w\in T_1$ for which the arc $st_w$ has height one,
  and let $X\subseteq W$ be the set of nodes $w \in W$ for which $st_w$ has height two.
  It is easy to conclude that if the code is feasible for demand
  $(T_1', T_2)$ then $X$ is a vertex cover.
  Conversely, from a vertex cover $X\subseteq W$ let $T_0=\{t_w\in T_1\;|\;w\in X\}$ and let $T_1'=T_1\setminus T_0$.  
  Then we get a feasible network code for demand $\tau'=(T_1', T_2)$
  with the following properties: the network code has
  height two on arcs incident to nodes in $T_0$, while on all other arcs it has height one.
  For a fieldsize large enough ($q\ge |W|$) we can choose a network code 
  which is pairwise linearly independent on arcs of type $st_w$ for $t_w\in T_0$. 
  Such a network code is feasible for $\tau'$ 
  because on one hand, a receiver $t_w\in T_1'$ has entering arc $st_w$ of height one, 
  on the other hand, a receiver $t_{uv}\in T_2$ has either two entering arcs of height two with linearly independent codes or it has one entering arc of height two and one of height one, which always transmit together two valuable layers to $t_{uv}$. 
  Hence $X$ is a minimal vertex cover if and only if demand $(T_1', T_2)$ is satisfiable and $T_1'$ is maximal.  
  \qed
\end{proof}

As a minimal mixed (vertices and edges) cover of the edges can be assumed to
contain only vertices, we also get the following.

\begin{cor}
  Given a network $D$, a demand $\tau=(T_1, T_2)$ and a number $K$, it is
  NP-hard to decide whether there exists a network code satisfying at
  least $K$ requests.
\end{cor}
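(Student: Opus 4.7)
The plan is to reuse the reduction from the previous theorem, but to count the total number of satisfied requests rather than insisting that every $T_2$-request be served. Given a Vertex Cover instance $G=(W,E)$, I would build the same network $D$: a source $s$, a receiver $t_w\in T_1$ with arc $s t_w$ for every $w\in W$, and a receiver $t_{uv}\in T_2$ with arcs $t_u t_{uv}$ and $t_v t_{uv}$ for every $uv\in E$. For any network code $\bc$, let $Y\subseteq W$ be the set of vertices $w$ with $h_{\bc}(s t_w)=2$. Exactly as in the previous proof, $t_w\in T_1$ is satisfied iff $w\notin Y$, and a receiver $t_{uv}\in T_2$ can be satisfied only if at least one of $u,v$ lies in $Y$; for a sufficiently large field ($q\ge|W|$) this bound is simultaneously attained by choosing the height-$2$ vectors on the arcs $s t_w$, $w\in Y$, to be pairwise linearly independent.

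Hence, for a fixed $Y$, the maximum number of satisfied requests is
\[
|W|-|Y| + \bigl|\{uv\in E:\{u,v\}\cap Y\neq\emptyset\}\bigr|
\;=\; |W|+|E| - \Bigl(|Y|+\bigl|\{uv\in E:\{u,v\}\cap Y=\emptyset\}\bigr|\Bigr),
\]
so maximizing satisfied requests is equivalent to minimizing the mixed cost consisting of one unit per vertex in $Y$ plus one unit per uncovered edge. Now I would invoke the remark preceding the corollary: given any $Y$, inserting one endpoint of each currently uncovered edge produces a vertex cover of size at most $|Y|+|\{uv:u,v\notin Y\}|$, while a minimum vertex cover $Y^{*}$ realizes cost $|Y^{*}|=\nu(G)$. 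Therefore the minimum mixed cost equals $\nu(G)$, and the maximum number of satisfiable requests equals $|W|+|E|-\nu(G)$.

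To finish the reduction I would set $K:=|W|+|E|-k$: the graph $G$ has a vertex cover of size at most $k$ if and only if the constructed network-coding instance admits a code satisfying at least $K$ requests. Since Vertex Cover is NP-hard, NP-hardness of the network-coding decision problem follows. I expect the only subtle step to be the tightness claim for $T_2$-requests, i.e.\ that the upper bound is actually attained for every choice of $Y$; this is routine given the pairwise-linear-independence trick already employed in the proof of the preceding theorem, so it should not be a real obstacle.
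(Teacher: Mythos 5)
Your proposal is correct and is essentially the paper's own argument: the paper derives this corollary from the same Vertex Cover reduction via the one-line remark that a minimum mixed (vertices-and-edges) cover of the edges can be assumed to contain only vertices, which is exactly the counting identity you spell out with $K=|W|+|E|-k$. The only quibble is notational: you write $\nu(G)$ for the minimum vertex cover size, a symbol usually reserved for the matching number.
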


\section{Tools for feasible network code construction}\label{limit}

In \cite{kim} Kim et al.\ gave a simple randomized network coding algorithm for the
multi-layered video streaming problem.
In their approach a function $h:V\to \{0,1,\ldots,k\} $ is determined, and then a randomized linear network code $\bc$ is sent in the network such that for each arc $uv\in A$,
 the highest layer with non-zero coefficient in $\bc(uv)$ is at most $h(v)$.
Their algorithm ensures that the first layer can be decoded at each receiver
with high probability, and some receivers may be able to decode more layers.

In this paper we give some (non-randomized) algorithms that are also based on restricting the
highest layer with non-zero coefficient, but in our approach restrictions may differ for arcs entering
the same node. In order to describe our algorithms, some further layer-related notions are needed.

\begin{defi}\label{def_realizableext}
    A function $f:A\to \{0,1,\ldots,k\}$ is a \textbf{height function}
  if there exists a finite field $\mathbb{F}_q$ and a linear network code $\bc$ over $\mathbb{F}_q$ with $h_{\bc}=f$.
  Similarly we can define when a function $g: V\to \{0,1,\ldots,k\}$ is a
  \textbf{performance function}, i.e., if there exists a linear network code
  $\bc$ over $\mathbb{F}_q$ with $p_{\bc}=g$.
  We say that functions $f: A\to \{0,1,\ldots,k\}$ and  $g:V\to
  \{0,1,\ldots,k\}$ form a
  \textbf{height-performance-pair} if there exists a network code $c$
  with $h_{\bc}=f$ and $p_{\bc}=g$.
  Given a  function $f:A\to\{0,1,\ldots,k\}$, a function $g:V\to\{0,1,\ldots,k\}$
  is called a \textbf{realizable extension} of $f$, if
  they form a height-performance-pair.
  A height function $f$ is \textbf{feasible} for a demand $\tau$ if it has
  a realizable extension $g$ such that $g\geq d_\tau$.

\end{defi}

\subsection{Sufficient condition for feasible height functions}

 Our algorithms for feasible network code construction
 for a demand $\tau$ will always first find
 a function $f: A\to \{0,1,\ldots,k\}$ and then a realizable extension $g$
 such that $g(v)\geq d_\tau$.
 
 In this subsection we give a sufficient condition for a function $f$ to be a height function (see Corollary  \ref{suff}). 
 As we will see, this condition is also necessary for two layers,
leading to a characterization for that case.
We use this characterization to get a new heuristic for three layers,
with better performance than earlier approaches.

 In this section we assume that the reader is familiar with
 the classical algorithm of Jaggi et al. \cite{jaggi}.
 In their algorithm, they construct a
 feasible network code for a demand $\tau=(\emptyset, \ldots, \emptyset, T_k)$
 by fixing $k$ arc-disjoint paths to every receiver and constructing the
 network code on the arcs one by one, in the topological order of their tails. 
 We say that an arc $a$ is \textbf{processed} during the algorithm, if the network code $\bc(a)$ is
 defined. 
 Jaggi et al. maintain that for every receiver, the span of the codes on the last processed arcs on the fixed $k$ paths
 remain the whole $k$-dimensional vector space. Their algorithm can be easily
 generalized for multi-layer demands.

\begin{defi}
  For a function $f: A \to \{0,1,\ldots, k\}$, a path $P$ with arcs $a_1,a_2,\ldots,a_r$ is called \textbf{monotone}, if
  $f(a_1)\le f(a_2)\le \ldots \le f(a_r)$. We define for such a monotone path
  \textbf{$\min(P)$} $=f(a_1)$ and \textbf{$\max(P)$} $=f(a_r)$.
  The former $\min(P)$ is also called the \textbf{value} of the path.
\end{defi}

\begin{defi}
  Let a node $v\in V\!-\!s$, a function $f: A\to \{0,1,\ldots,k\}$ and a function $g:V \to \{0,1,\ldots,k\}$ be given.
  An \textbf{$i$-fan} of $v$ consists of $i$ pairwise arc-disjoint non-trivial
  (i.e., containing at least one arc) monotone paths $P_1,\ldots, P_i$ ending
  at $v$, where
  for all $j\le i$ we have $j\le \min(P_j)\le \max(P_j)\le i$, and $P_j$
  begins at a node $v_j$ with $g(v_j)\ge \min(P_j)$.
\end{defi}
\begin{defi}\label{fan-ext}
 If a function $f:A\to \{0,1,\ldots,k\}$ and a function $g:V\to \{0,1,\ldots,k\}$
 is given in such a way that
 \begin{enumerate}[i,] 
  \item for every node $v$ with $g(v)>0$ there exists a $g(v)$-fan of $v$,
  \item for every arc $vw$, either $f(vw)\le g(v)$,
 or there exists an incoming arc $uv$ with $f(uv)=f(vw)$,
 \end{enumerate}
 then $g$ is called a \textbf{fan-extension} of $f$.
 Let us call an arc $uv$ \textbf{free}
  if $f(uv)\leq g(u)$. Note that every starting arc of a path in a fan is
  free.
\end{defi}

\begin{theorem}
 A fan-extension $g$ of a function $f$ is always a realizable extension of $f$.
\end{theorem}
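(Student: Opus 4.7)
The plan is a Jaggi-style greedy construction of $\bc$ over a sufficiently large finite field $\mathbb{F}_q$. Process the arcs in topological order of their tails; for each arc $a = uv$ one must choose $\bc(a) \in \langle \bc, u \rangle$ with $h_{\bc}(a) = f(a)$, while maintaining the inductive invariant that $\be_1, \ldots, \be_{g(w)} \in \langle \bc, w \rangle$ for every already-handled node $w$ (which at the end yields $p_{\bc}(w) \ge g(w)$). Such a choice always exists: if $a$ is free, the invariant at $u$ gives $\be_{f(a)} \in \langle \bc, u \rangle$, so many candidates of height $f(a)$ are available; otherwise, condition (ii) of the fan-extension hypothesis supplies an incoming arc $wu$ with $f(wu) = f(a)$, so $\bc(wu)$ itself is already a valid candidate.

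The heart of the argument is propagating the invariant to a node $v$ with $g(v) = \ell > 0$ using its $\ell$-fan $P_1, \ldots, P_\ell$, with last arcs $e_1, \ldots, e_\ell$. By induction on the topological order, at each starting node $v_j$ we have $\be_j \in \langle \bc, v_j \rangle$, since $g(v_j) \ge \min(P_j) \ge j$. The paths are pairwise arc-disjoint and along $P_j$ the heights lie monotonically in $[j, \ell]$. On free arcs, choose $\bc$ as a uniformly random vector of the prescribed height in the span at its tail; on non-free arcs, copy the code of the matching incoming arc supplied by condition (ii). With this rule, I would then show that $\bc(e_1), \ldots, \bc(e_\ell)$ span $\langle \be_1, \ldots, \be_\ell \rangle$; the failure event is the vanishing of a polynomial, namely the determinant of the $\ell \times \ell$ matrix obtained by restricting these vectors to coordinates $1, \ldots, \ell$, in the random coefficients on the free arcs of the fan.

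The main obstacle is verifying that each of these determinant polynomials is not identically zero. A clean approach is to exhibit one symbolic assignment realizing simultaneous non-singularity: propagate $\be_j$ unchanged along $P_j$ while the traversed arc is free, and at each height jump use the copy rule to promote to a higher-height code that remains independent of the codes on the other paths (arc-disjointness of the fan is essential here, and the monotonicity along each $P_j$ guarantees that no earlier lower-coordinate content is destroyed). Once non-triviality is established, the Schwartz--Zippel lemma and a union bound over $v \in V$ (each polynomial has degree at most $|A|$) give a positive success probability for $q$ sufficiently large, so a deterministic valid code exists. A final bookkeeping step, made possible because the copy rule never introduces new high-layer content beyond what $f$ forces, turns the resulting $p_{\bc} \ge g$ into the equality $p_{\bc} = g$ required by the definition of a realizable extension.
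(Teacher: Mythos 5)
Your overall architecture --- processing arcs in topological order, maintaining the invariant $\be_1,\ldots,\be_{g(w)}\in\langle \bc,w\rangle$ at handled nodes, and using the $\ell$-fan of $v$ to certify the invariant at $v$ --- is the same as the paper's. The gap is in your rule for non-free arcs. Let $uv$ be a non-free arc lying on a path $P_p$ of some fan, and let $a_p$ be the predecessor arc of $P_p$ at $u$. Monotonicity only gives $f(a_p)\le f(uv)$, so the arc $wu$ with $f(wu)=f(uv)$ supplied by condition (ii) is in general \emph{not} $a_p$. Setting $\bc(uv):=\bc(wu)$ discards $\bc(a_p)$ entirely and thus severs the information the path was carrying from its starting node $v_p$; the copied vector may well be linearly dependent on the codes currently held on the other paths of the fan, in which case the last-arc codes at $v$ no longer span $\langle\be_1,\ldots,\be_\ell\rangle$. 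This also undermines your Schwartz--Zippel step: the non-free arcs carry no randomness (their values are forced by the copy rule), so the determinant you need to be a non-zero polynomial in the free-arc variables can be identically zero, and the ``symbolic assignment'' you sketch does not certify anything because there is nothing to assign on those arcs. A further complication you do not address is that one arc can lie on fan paths belonging to several different nodes' fans, entering $u$ through \emph{different} predecessor arcs, so no single copied value can serve all of them.

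The repair is essentially the paper's construction: on a non-free arc $uv$, take $\bc(uv)$ to be a linear combination of the predecessor codes $\bc(a_p)$ over \emph{all} fans through $uv$ \emph{together with} $\bc(wu)$, chosen via the control-vector lemma of Jaggi et al.\ (for each fan, pair the predecessor code with a control vector orthogonal to the other paths' current codes, and pair $\bc(wu)$ with $\be_{f(uv)}$). This single vector then simultaneously preserves each fan's spanning invariant and has non-zero $f(uv)$-th coordinate, hence height exactly $f(uv)$; monotonicity guarantees it introduces no coordinate above $f(uv)$. The paper also first normalizes each fan so that every path has exactly one free arc (its first), which is what makes the free/non-free dichotomy clean: free arcs occur only at path starts, where the invariant at the start node supplies $\be_j$, and everything downstream is handled by the combination rule above. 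Without some version of this combination step your proof does not go through.
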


\begin{proof}
  If a node can decode the first $i$ layers then it can also send any
  linear combination of these layers.  
  \begin{claim}
  If $v$ has an $i$-fan then it also has an $i$-fan with
  exactly one free arc on each path.
  \end{claim}
  \begin{proof}
  Let $a'$ be a free arc on a path $P_j$ of a fan such that it is not the first arc $a$.
  Since $P$ is monotone, $j\leq f(a) \leq f(a')$, hence the fan resulting from replacing $P_j$
  by the subpath $P'_j$ starting from $a'$ to $v$ is also an $i$-fan of $v$.\ssqed
  \end{proof}
  Let us fix such a fan for every node $v$ with $g(v)>0$. 
First we define the network code $\bc$ on arcs covered by at least one fan. 
Let $L$ denote the maximum number of fans an arc is covered by. Our algorithm constructs a 
network code over any finite field $F_q$ with $q>L$. Note that since $|V|>L$, $q>|V|$ is always sufficient.  
  We modify
  the algorithm of Jaggi et al.\ \cite{jaggi} the following way: on free arcs of a fan
  we construct the network code in increasing order of the $f$ values on the arcs.
  Since the paths in a fan satisfy that $\min(P_j)\geq j$ and $q>L$, we can define the
  network code $\bc$ so that for every fan, $\dim \langle \bc(a_1), \ldots, \bc(a_j)\rangle
  =j $
  for all $1\leq j \leq i$, where $a_j$ is the first arc on path $P_j$.
On non-free arcs we define the network code in a topological order of their tails.
When constructing the network code on a non-free arc $uv, u\neq s$, we maintain that for
  every $i$-fan which contains $uv$, the span of codes on the last processed arcs on the
  $i$ paths remain the $i$-dimensional subspace of the first $i$ layers.
  We use the following lemma to prove that this is possible.
  \begin{lem}\cite{jaggi}\label{coding}
Let $n\leq q$. Consider pairs $(\bx_i,\by_i)\in \mathbb{F}_q^k \times \mathbb{F}_q^k$ with $\bx_i\cdot \by_i\neq 0$ for $1\leq i \leq n$. There exists a linear combination $\bb$ of vectors $\bx_1, \ldots, \bx_n$ such that $\bb \cdot \by_i \neq 0$ for $1\leq i \leq n$.\qed
  \end{lem}
If vectors $\bv_1,\ldots, \bv_n$ span the subspace of the first $n$ layers, then
 for every $\bv_i, 1\leq i\leq n$ there is a vector $\by_i$ in this subspace with $\bv_j\cdot\by_i= 0, i\neq j$ and $\bv_j\cdot\by_j\neq 0$. We call $\by_j$ a \textbf{control vector} of $\bv_j$.
 Let $F_1,\ldots,F_{\ell}$ denote the set of fans containing $uv$.
  Consider first $F_1$, and suppose it is an $i$-fan.  Let $P_1, \ldots P_i$
  denote the paths of fan $F_1$. For $j=1,\ldots,i$ let $a_j$ denote the last
  processed arc of $P_j$, and let $\bv_j=\bc(a_j)$.  After determining control
  vectors $\bz_1,\ldots,\bz_i$, let $\bx_1=\bv_p$ and $\by_1=\bz_p$, where
  path $P_p$ is the one that uses arc $uv$.  Clearly arc $a_p$ enters $u$, and
  $\bx_1=\bc(a_p)$. Let the tail of arc $a_p$ be denoted by $w_1$.  
  For the other fans we similarly define
  $\bx_2,\by_2,\ldots,\bx_{\ell},\by_{\ell}$ and nodes $w_2, \ldots w_{\ell}$.  Let $wu$ be an entering arc with
  $f(wu)=f(uv)$. Since $uv$ is a non-free arc, such an arc exists. Define
  $\bx_{\ell+1}=\bc(wu)$ and $\by_{\ell+1}=\be_{f(u,v)}$.  
  Now let $\bb$ be the linear combination provided by Lemma \ref{coding} (for $\bx_1,\by_1, \ldots, \bx_{l+1},\by_{l+1}$) 
  and define $\bc(uv)=\bb$.   
  The height of $\bc(uv)$ will be at most the height of arcs $w_iu$, hence it
  remains under $f(uv)$, because all $P_j$'s are monotone.  As
  $\bb\cdot \by_{\ell+1}=\bb\cdot \be_{f(uv)}\ne 0$, the height of $uv$ is
  exactly $f(uv)$.
  Finally, for arcs not covered by any fan we can choose $\bc$ arbitrarily within the
  height constraint. Because of property ii, in Definition \ref{fan-ext}, this can also be done in 
  the topological order of the tails of these arcs.\qed
\end{proof}

\begin{cor}\label{suff}
If a function $f:A\to \{0,1,\ldots, k\}$ has a fan-extension then $f$ is a height function.
\end{cor}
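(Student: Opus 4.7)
The plan is to derive this corollary as an immediate consequence of the theorem just proved. I would not redo any construction; instead, I would simply chain together the definitions involved.

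First I would recall the definition of a height function: $f$ is a height function precisely when there exists a linear network code $\bc$ over some finite field with $h_{\bc} = f$. Next I would recall that a fan-extension $g$ of $f$ is, by definition, a function $g:V\to\{0,1,\ldots,k\}$ satisfying conditions (i) and (ii) of Definition \ref{fan-ext}, while a realizable extension of $f$ is a function $g$ that pairs with $f$ as a height-performance-pair, meaning there exists a network code $\bc$ with both $h_{\bc}=f$ and $p_{\bc}=g$.

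Given these, the argument is a one-line unfolding. Assuming that $f$ has a fan-extension $g$, the preceding theorem guarantees that $g$ is a realizable extension of $f$. By the definition of realizable extension, there exists a network code $\bc$ with $h_{\bc}=f$ (and, as additional information, $p_{\bc}=g$). The existence of such a $\bc$ with $h_{\bc}=f$ is, by the definition of height function, exactly the statement that $f$ is a height function.

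There is no real obstacle here: the substantive work, namely constructing the network code from the fan structure via a modification of the Jaggi et al.\ algorithm together with Lemma \ref{coding}, is already contained in the preceding proof. The corollary is a definitional repackaging that drops the extension $g$ and retains only the statement about $f$.
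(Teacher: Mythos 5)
Your proof is correct and matches the paper's treatment: the corollary follows immediately from the preceding theorem by unfolding the definitions of realizable extension and height function, which is precisely why the paper states it without a separate proof.
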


\subsection{Maximal fan-extensions}

In this subsection we prove a key property of fan-extensions.
\begin{theorem}\label{max}
If a function $f$ has a fan-extension, then it has a unique maximal fan-extension $g^*$, that is 
$g^*(v)\geq g(v)$ for every fan-extension $g$ of $f$ and every node $v$.
\end{theorem}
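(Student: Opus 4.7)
The plan is to show that the family of fan-extensions of $f$ is closed under pointwise maximum. Since each $g$ takes values in the finite set $\{0,1,\ldots,k\}$ and $V$ is finite, closure under pairwise maximum immediately implies that there is a unique pointwise-maximal fan-extension $g^*$, obtained as the pointwise maximum over the (finite) family of all fan-extensions of $f$.

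So given two fan-extensions $g_1, g_2$, I would define $g(v) := \max(g_1(v), g_2(v))$ and check the two conditions of Definition \ref{fan-ext} for $g$. For condition (i), fix $v$ with $g(v) > 0$; without loss of generality $g(v) = g_1(v)$, and pick a $g_1(v)$-fan $P_1,\ldots,P_{g(v)}$ of $v$ witnessing condition (i) for $g_1$. The monotonicity and arc-disjointness of the paths and the bounds $j \le \min(P_j) \le \max(P_j) \le g(v)$ depend only on $f$ and on $g(v)$, so they carry over verbatim. The only clause that involves the extension function is the starting-node requirement $g_1(v_j) \ge \min(P_j)$; but then $g(v_j) \ge g_1(v_j) \ge \min(P_j)$, so the same paths form a $g(v)$-fan with respect to $g$.

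For condition (ii), take an arbitrary arc $vw$. If $f(vw) \le g_i(v)$ for either $i\in\{1,2\}$, then $f(vw) \le g(v)$ and we are done. Otherwise $f(vw) > g_1(v)$ and $f(vw) > g_2(v)$; applying condition (ii) to $g_1$ (say) yields an incoming arc $uv$ with $f(uv)=f(vw)$, which is exactly the required witness for $g$. Hence $g$ is a fan-extension of $f$, and taking the pointwise maximum over the finite family of all fan-extensions of $f$ produces the claimed $g^*$.

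I do not anticipate a real obstacle here: the argument rests entirely on the observation that, among the clauses defining a fan-extension, only the starting-node condition $g(v_j) \ge \min(P_j)$ refers to the extension itself, and this clause is monotone in $g$. The existence hypothesis — that \emph{some} fan-extension of $f$ exists — is used only to ensure the family over which we take the maximum is nonempty.
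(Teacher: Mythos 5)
Your proof is correct, but it takes a different route from the paper's. The paper picks a fan-extension $g^+$ maximizing $\sum_{v\in V}g^+(v)$ and argues by contradiction: if some fan-extension $g'$ satisfied $g'(v)>g^+(v)$ at some node, one takes the topologically \emph{first} such $v$; the $g'(v)$-fan at $v$ is then also a fan with respect to $g^+$, because the starting nodes of its paths precede $v$ in the topological order, where $g^+\geq g'$ holds, and so by Proposition \ref{fan_ext} one could raise $g^+(v)$, contradicting sum-maximality. You instead prove that the family of fan-extensions is closed under pointwise maximum and take the maximum over the (finite, nonempty) family. Both arguments rest on the same core observation, which you state explicitly: the only clause of Definition \ref{fan-ext} that refers to the extension itself is the starting-node condition $g(v_j)\geq\min(P_j)$, and this clause is monotone in $g$; your handling of condition (ii) via the case split on whether $f(vw)\leq\max(g_1(v),g_2(v))$ is also sound. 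What each approach buys: yours is self-contained and avoids the topological-order bookkeeping entirely, giving a cleaner lattice-style statement (closure under join); the paper's phrasing isolates Proposition \ref{fan_ext}, which it then reuses in the subsequent theorem showing that the maximal fan-extension can be \emph{computed} by scanning the nodes in topological order, so the topological-order machinery is not wasted there.
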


First we start with a very important, though straightforward observation.
\begin{prop}\label{fan_ext}
Given a fan-extension $g$ of a function $f$ such that there exists an $i$-fan to a node $v$ with $i>g(v)$,
setting $g(v)$ to $i$ is also a fan-extension of $f$.
\end{prop}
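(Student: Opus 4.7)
The plan is to verify directly that the updated function $g'$, defined by $g'(v)=i$ and $g'(u)=g(u)$ for $u\ne v$, satisfies both conditions (i) and (ii) of Definition \ref{fan-ext}. Since this is really a monotonicity property of fan-extensions in the second coordinate, neither step should be hard; the only thing to watch is that old fans remain valid when $g$ is increased at one vertex.

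First I would check condition (i). At the node $v$ itself, an $i$-fan is provided by the hypothesis, which is exactly a $g'(v)$-fan. For any other node $w\ne v$, the original fan-extension $g$ already supplies a $g(w)$-fan, and this same collection of paths is still a $g'(w)$-fan: arc-disjointness, monotonicity, and the bounds $j\le\min(P_j)\le\max(P_j)\le g(w)$ are structural and do not mention $g$, while the starting condition $g(v_j)\ge\min(P_j)$ remains valid because $g'$ agrees with $g$ on all nodes except $v$, and at $v$ we have $g'(v)=i\ge g(v)\ge\min(P_j)$ whenever $v$ appears as some $v_j$.

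Next I would check condition (ii). For an arc $uw$ with tail $u\ne v$, the clause reads ``$f(uw)\le g'(u)$ or there is an incoming arc of $u$ of the same height,'' and since $g'(u)=g(u)$ this is unchanged from the original fan-extension $g$. For arcs $vw$ leaving $v$, if previously $f(vw)\le g(v)$ held, then $f(vw)\le g(v)\le i=g'(v)$ still holds; otherwise the alternative condition (an incoming arc $uv$ with $f(uv)=f(vw)$) did not involve $g$ at all and continues to hold.

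Both conditions are therefore preserved, so $g'$ is a fan-extension of $f$. The main ``obstacle''---really just a small sanity check---is observing that increasing $g$ at a single node can only help the starting condition $g(v_j)\ge\min(P_j)$ for fans of other vertices, and can only help the $f(vw)\le g(v)$ disjunct in condition (ii) for arcs out of $v$; no existing fan or arc condition is ever invalidated.
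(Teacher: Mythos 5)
Your proof is correct and is exactly the routine verification the paper has in mind: the paper states this proposition as a ``straightforward observation'' and omits the proof entirely, and your check that both conditions of Definition~\ref{fan-ext} are monotone in $g$ (existing fans survive because the only $g$-dependent clause is $g(v_j)\ge\min(P_j)$, which can only become easier to satisfy, and likewise for the disjunct $f(vw)\le g(v)$ in condition~(ii)) supplies precisely the missing details.
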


\begin{proof}[Proof of Theorem \ref{max} ]

  Let $g^+$ be a fan-extension for which $\sum_{v\in V}g^+(v)$ is
  maximum and assume indirectly that there exists another fan-extension
  $g'$ and a node $v$ for which $g'(v)>g^+(v)$.  We can assume that $v$ is the
  first such node in a topological order.  From Proposition \ref{fan_ext},
  increasing $g^+$ on $v$ to
  $i=g'(v)$ would also give a fan-extension, because the $i$-fan of $v$
  is also an $i$-fan for $g^+$.\qed
\end{proof}

\goodbreak\strut

\begin{theorem}
The maximal fan-extension of a function $f$ can be determined algorithmically.
\end{theorem}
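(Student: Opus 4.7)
The plan is to describe a topological-order sweep that computes $g(v)$ one vertex at a time, reducing the per-vertex subproblem to a max-flow computation, and to verify correctness by induction against the unique maximal fan-extension $g^*$ whose existence is guaranteed by Theorem~\ref{max}.

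I would treat the source specially by fixing $g(s) = k$, and then process the remaining vertices in a topological order of $D$. At each vertex $v$, I would compute
\[
\phi(v) \;:=\; \max\bigl\{\, i : v \text{ has an } i\text{-fan with respect to the current } g \,\bigr\}
\]
and set $g(v) := \phi(v)$. Because the starting node of every path in a fan of $v$ strictly precedes $v$ topologically, all $g$-values that enter the definition of $\phi(v)$ would already be finalized when $v$ is processed. After $g(v)$ is set, I would verify condition~(ii) of Definition~\ref{fan-ext} on each arc leaving $v$; if it ever fails, the algorithm reports that $f$ admits no fan-extension at all.

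Correctness would proceed by induction on the topological order. Suppose the algorithm has already produced $g(u) = g^*(u)$ for every $u$ preceding $v$. The $g^*(v)$-fan that certifies $g^*$ at $v$ uses only such predecessors as its nodes, so it remains a valid fan under the algorithm's $g$; hence $\phi(v) \geq g^*(v)$. Conversely, if $\phi(v) > g^*(v)$, then the fan witnessing $\phi(v)$ is also a fan under $g^*$ (same values on predecessors), so Proposition~\ref{fan_ext} would allow us to raise $g^*(v)$ to $\phi(v)$ while keeping $g^*$ a fan-extension, contradicting its maximality. Thus $\phi(v) = g^*(v)$, closing the induction.

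The nontrivial subroutine is computing $\phi(v)$, for which I would scan $i = k, k-1, \ldots, 0$ and stop at the largest $i$ for which an $i$-fan exists, testing each value by a single max-flow in an auxiliary layered digraph. I would replicate every $u \in V$ as pairs $(u,\ell)$ for $\ell \in \{1,\ldots,i\}$, and for each arc $uw \in A$ with $f(uw) = \ell' \leq i$ insert arcs $(u,\ell) \to (w,\ell')$ for all $\ell \leq \ell'$; this would enforce monotonicity along any path. A super-source with arcs to each $(u,\ell)$ satisfying $g(u) \geq \ell$, together with a super-sink fed by the copies of $v$, would turn the problem of finding $i$ arc-disjoint monotone paths ending at $v$ into an ordinary arc-disjoint-paths problem. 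The main obstacle will be enforcing the per-path lower bound $\min(P_j) \geq j$: the cleanest route is to split the super-source into $i$ layered sources (the $j$-th feeding only vertices $(u,\ell)$ with $\ell \geq j$) and then to use a Hall-type matching argument to convert any feasible flow into a valid labelling of the $i$ arc-disjoint monotone paths by $1, \ldots, i$.
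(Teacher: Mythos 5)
Your overall strategy coincides with the paper's: sweep the nodes in topological order, at each node compute the largest $i$ admitting an $i$-fan by reducing to an arc-disjoint-paths (max-flow) question in an auxiliary graph, justify the greedy choice via Proposition~\ref{fan_ext} and the maximality of $g^*$, and enforce the constraint $\min(P_j)\ge j$ by a family of layered entry points (the paper's extra nodes $t_1,\dots,t_i$ play exactly the role of your $i$ layered sources). The induction against $g^*$ is fine. The gap is in the auxiliary graph itself, and it fails in the direction that matters.

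Concretely, replicating each vertex as the pairs $(u,\ell)$ turns a single arc $uw$ with $f(uw)=\ell'$ into $\ell'$ parallel arcs $(u,1)\to(w,\ell'),\dots,(u,\ell')\to(w,\ell')$. Two arc-disjoint paths in your layered graph may use two \emph{different} copies of the same original arc (one path currently at level $1$ at $u$, another at level $2$), so a flow of value $i$ need not yield $i$ arc-disjoint monotone paths in $D$, and $\phi(v)$ is overestimated; the resulting $g$ is then not a fan-extension at all. You need a unit-capacity bottleneck per original arc, which is precisely what the paper's subdivision nodes $z^u_{ux},z^x_{ux}$ with the single arc $z^u_{ux}z^x_{ux}$ provide (monotonicity is then encoded by the arcs $z^u_{wu}z^u_{ux}$ for $f(wu)\le f(ux)$, with no duplication of the arc itself). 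A second, smaller slip points the same way: attaching the super-source to every $(u,\ell)$ with $g(u)\ge\ell$ and then letting the first arc jump to any level $\ell'\ge\ell$ admits starting arcs $uw$ with $f(uw)>g(u)$, i.e.\ non-free first arcs, again overcounting; the entry point for a starting arc $uw$ must be tied to the level $f(uw)$ and enabled only when $g(u)\ge f(uw)$, as in the paper's reattachment of free arcs to $t_{f(uw)}$. Both defects are repairable, and with those gadgets in place your layered-source device for $\min(P_j)\ge j$ does work, but as written the reduction proves the wrong equivalence.
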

\begin{proof}
By Proposition \ref{fan_ext}, it is enough to prove that 
  we can calculate the maximal fan-extension in a topological order of the nodes.
  Assume that $g$ is defined for any node before a node $v\in V\!-\!s$ in that order.
  In order to find monotone paths, we build auxiliary graphs (see Figure \ref{fig:transform}).
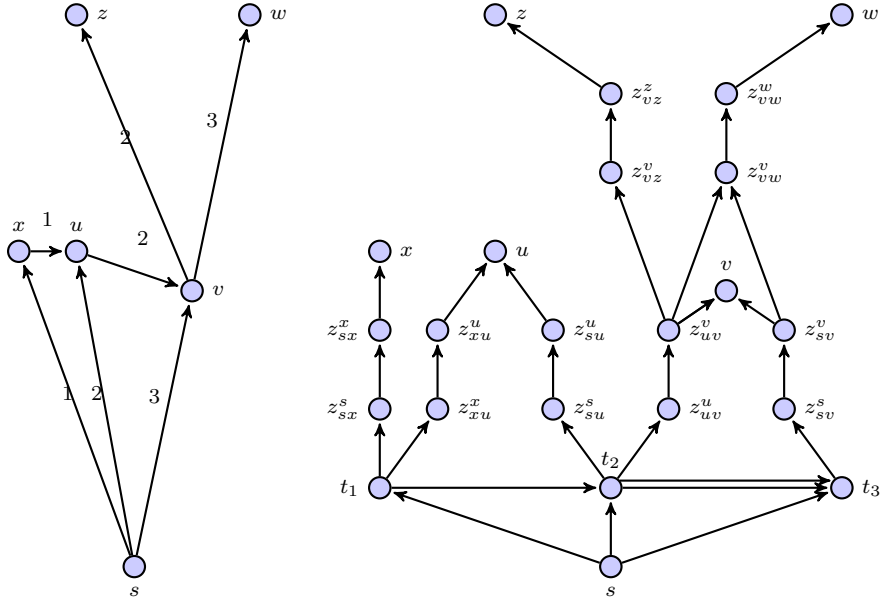
\begin{figure}
\centering
\begin{tikzpicture}[->,>=stealth',shorten >=1pt,auto,node distance=2cm,
  thick,main node/.style={circle,fill=blue!20,draw,font=\sffamily\bfseries},scale=0.095]

  \def \h {11}
  \def \w {8}
  \def \ww {4}

  \node[main node] (1) at (4*\ww,0) [label=below:$s$] {};
  \node[main node] (2) at (2*\ww,4*\h) [label=above:$u$] {};
  \node[main node] (3) at (6*\ww,3.5*\h) [label=right:$v$] {};
  \node[main node] (4) at (2*\ww,7*\h) [label=right:$z$] {};
  \node[main node] (5) at (8*\ww,7*\h) [label=right:$w$] {};
  \node[main node] (6) at (0,4*\h) [label=above:$x$] {};

  \path[every node/.style={font=\sffamily\small}]
    (1) edge node [label=$1$] {} (6)
    (1) edge node [label=$2$] {} (2)
    (1) edge node [label=$3$] {} (3)
    (2) edge node [label=$2$] {} (3)
    (3) edge node [label=$3$] {} (5)
    (3) edge node [label=$2$] {} (4)
    (6) edge node [label=$1$] {} (2)
;

  \def \d {50}

  \node[main node] (21) at (4*\w+\d,0) [label=below:$s$] {};
  \node[main node] (22) at (\d,\h) [label=left:$t_1$] {};
  \node[main node] (23) at (4*\w+\d,\h) [label=above:$t_2$] {};
  \node[main node] (24) at (8*\w+\d,\h) [label=right:$t_3$] {};
  \node (123) at (4*\w+\d,\h+1) {};
  \node (124) at (8*\w+\d,\h+1) {};
  \node[main node] (25) at (1*\w+\d,2*\h) [label=right:$z_{xu}^x$] {};
  \node[main node] (26) at (1*\w+\d,3*\h) [label=right:$z_{xu}^u$] {};
  \node[main node] (27) at (3*\w+\d,2*\h) [label=right:$z_{su}^s$] {};
  \node[main node] (28) at (3*\w+\d,3*\h) [label=right:$z_{su}^u$] {};
  \node[main node] (29) at (5*\w+\d,2*\h) [label=right:$z_{uv}^u$] {};
  \node[main node] (30) at (5*\w+\d,3*\h) [label=right:$z_{uv}^v$] {};
  \node[main node] (31) at (7*\w+\d,2*\h) [label=right:$z_{sv}^s$] {};
  \node[main node] (32) at (7*\w+\d,3*\h) [label=right:$z_{sv}^v$] {};
  \node[main node] (33) at (2*\w+\d,4*\h) [label=right:$u$] {};
  \node[main node] (34) at (6*\w+\d,3.5*\h) [label=above:$v$] {};
  \node[main node] (35) at (4*\w+\d,5*\h) [label=right:$z_{vz}^v$] {};
  \node[main node] (36) at (4*\w+\d,6*\h) [label=right:$z_{vz}^z$] {};
  \node[main node] (37) at (6*\w+\d,5*\h) [label=right:$z_{vw}^v$] {};
  \node[main node] (38) at (6*\w+\d,6*\h) [label=right:$z_{vw}^w$] {};
  \node[main node] (39) at (2*\w+\d,7*\h) [label=right:$z$] {};
  \node[main node] (40) at (8*\w+\d,7*\h) [label=right:$w$] {};
  \node[main node] (50) at (\d,2*\h) [label=left:$z_{sx}^s$] {};
  \node[main node] (51) at (\d,3*\h) [label=left:$z_{sx}^x$] {};
  \node[main node] (52) at (\d,4*\h) [label=right:$x$] {};

  \path[every node/.style={font=\sffamily\small}]
    (21) edge node {} (22)
    (21) edge node {} (23)
    (21) edge node {} (24)
    (22) edge node {} (50)
    (22) edge node {} (23)
    (22) edge node {} (25)
    (23) edge node {} (27)
    (23) edge node {} (29)
    (23) edge node {} (24)
    (123) edge node {} (124)
    (24) edge node {} (31)
    (50) edge node {} (51)
    (51) edge node {} (52)
    (25) edge node {} (26)
    (26) edge node {} (33)
    (27) edge node {} (28)
    (28) edge node {} (33)
    (29) edge node {} (30)
    (30) edge node {} (34)
    (31) edge node {} (32)
    (32) edge node {} (34)
    (32) edge node {} (37)
    (30) edge node {} (34)
    (30) edge node {} (35)
    (30) edge node {} (37)
    (35) edge node {} (36)
    (37) edge node {} (38)
    (38) edge node {} (40)
    (36) edge node {} (39)
;

\end{tikzpicture}
  \caption{Auxiliary graph $D_{v,3}$.}\label{fig:transform}
\end{figure}

  For $0\leq i \leq k$, let
  $D_{v,i}=(V',A')$ denote the following auxiliary graph of $D$: we delete all
  arcs with $f$ value greater than $i$.  We add $i$ extra nodes to the
  digraph:
  $t_1,\ldots ,t_i$.  For every node $u$ before $v$ in the
  topological order we change the tail of every outgoing arc $uw$ from $u$ to
  $t_{f(uw)}$ if $g(u) \geq f(uw)$ and we define $f(t_{f(uw)}w):=f(uw)$.
  We replace every arc $ux \in A'$ by two new nodes $z^u_{ux}$ and
  $z^x_{ux}$, and arcs $z^u_{ux}z^x_{ux}$ and $z^x_{ux}x$, and for each
  $wu\in A'$, if $f(wu)\leq f(ux)$, then add arc $z^u_{wu}z^u_{ux}$.
  For every vertex of the form $z^{t_i}_{t_ix}$ we  also add arc $t_iz^{t_i}_{t_ix}$.
  Finally we add extra arcs: $st_j$ for $1\leq j \leq i$ and
  $i-1$ parallel copies of $t_jt_{j+1}$ for $1\leq j \leq i-1$.
  
  \begin{lem}\label{fan}
  There exists an $i$-fan to $v\in V$ if and only if $\lambda_{D_{v,i}}(s,v)
  = i$.
\end{lem}
\begin{proof}
  Note that a monotone path $P$ to $v$ in $D$ which has exactly one free arc, corresponds to a path in $D_{v,i}$
  starting from $t_{\min(P)}$ and vice versa.  Hence an $i$-fan corresponds to
  $i$ paths in $D_{v,i}$, each starting from a node $t_j$ for some $j$.
  Suppose indirectly that $\lambda_{D_{v,i}}(s,v) < i$, that is, there exists
  an $\overline{s} v$ set $X\subseteq V'$ with $\varrho(X)<i$. Since
  $\lambda_{D_{v, i}}(s,t_i)=i$, $t_i\notin X$. Let $j$ denote the greatest
  integer for which $t_j\in X$. Since for an $i$-fan at least $i-j$ paths in
  the fan have value at least $j+1$, paths in $D_{v,i}$ corresponding to paths
  of the fan enter $X$ on at least $i-j$ arcs.
  Also, there are $j$ paths to $t_j$ in $D_{v,i}$ using arcs between $s$ and $t_1,\ldots, t_j$ only,
  which are disjoint from the arcs of the fan. Hence there are at least $i$ arcs entering $X$, contradicting the assumption.

  To prove the other direction, let $P_1, P_2, \ldots, P_i$ be $i$ arc-disjoint
  $sv$ paths in $D_{v,i}$.  Note that $\{t_1, \ldots, t_i\}$ is a cut set in
  $D_{v, i}$ hence every path $P_j$ must go through at least one of them. Since
  $\varrho(\{t_1,\ldots,t_j\})=j$, at least $i-j+1$ paths go through the set
  $\{t_{j+1}, \ldots, t_i\}$, which correspond to paths in $D$ with value at
  least $j+1$.\qed
\end{proof}

The maximal possible value of $g(v)$ is the maximal $i$ for which there
exists an $i$-fan of $v$.
Once $g$ is determined for every node, we can easily check property ii, 
in Definition \ref{fan-ext} for $f$ and $g$.\qed
\end{proof}

Lemma \ref{fan} shows that the existence of a fan is equivalent with a connectivity requirement in an auxiliary graph.
\begin{cor}
Given a function $f: A \to \{0,1,\ldots, k\}$ and a demand $\tau$, we can check algorithmically whether $f$ has a fan-extension $g$ such that $g\geq d_\tau$
by calculating the maximal fan-extension $g^*$ and comparing it to $d_\tau$.
\end{cor}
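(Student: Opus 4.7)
The plan is to observe that this corollary is essentially a packaging of the preceding results: Theorem \ref{max} guarantees the existence and uniqueness of a maximal fan-extension $g^*$, and the theorem immediately following it (via Lemma \ref{fan}) gives a procedure to compute $g^*$. So what remains is to explain the comparison step and confirm correctness.

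First I would spell out the algorithm. Process the nodes of $V\!-\!s$ in a fixed topological order. For each such $v$, having already computed $g^*$ on all earlier nodes, build for $i=k,k-1,\ldots,1$ the auxiliary digraph $D_{v,i}$ described in the proof of the algorithmic theorem (using the already-computed $g^*$ values to decide which arcs emanate from which $t_j$), and compute $\lambda_{D_{v,i}}(s,v)$ by any standard max-flow routine. By Lemma \ref{fan}, an $i$-fan of $v$ exists iff $\lambda_{D_{v,i}}(s,v)=i$, so the largest such $i$ (or $0$ if none) is set as $g^*(v)$. This produces $g^*$ in polynomial time. Finally, verify condition ii of Definition \ref{fan-ext}: for each arc $vw$ with $f(vw)>g^*(v)$, check that some incoming arc $uv$ has $f(uv)=f(vw)$. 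If this check fails for some arc, then $f$ admits no fan-extension at all (see below), and we report infeasibility; otherwise $g^*$ is a fan-extension.

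For correctness, comparison with $d_\tau$ is justified by the maximality of $g^*$ from Theorem \ref{max}: if some fan-extension $g$ with $g\geq d_\tau$ exists, then $g^*\geq g\geq d_\tau$, so $g^*\geq d_\tau$ certifies the existence; conversely, if $g^*\geq d_\tau$, then $g^*$ itself is the required fan-extension. If condition ii fails for $f$, then no fan-extension of $f$ can exist, because condition ii depends only on $f$ and the values of $g$ at tails of arcs, and $g^*$ already maximizes these tail values; hence reporting infeasibility is valid.

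The only subtlety, and thus the most delicate step, is making sure the inductive computation of $g^*(v)$ via $D_{v,i}$ really does use the correct (maximal) values of $g^*$ at earlier nodes. This is legitimate because the construction of $D_{v,i}$ only looks at $g(u)$ for $u$ strictly before $v$ in the topological order, and Proposition \ref{fan_ext} guarantees that setting $g(v)$ to the largest $i$ admitting an $i$-fan preserves the fan-extension property; inductively this yields exactly the unique maximum $g^*$ promised by Theorem \ref{max}. Everything else is bookkeeping and a single connectivity check per $(v,i)$ pair.
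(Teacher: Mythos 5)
Your proposal is correct and follows essentially the same route as the paper, which states this corollary without a separate proof precisely because it is a packaging of Theorem \ref{max}, Proposition \ref{fan_ext}, and the topological-order computation via the auxiliary graphs $D_{v,i}$ and Lemma \ref{fan}. Your added remarks --- that the comparison step is justified by maximality of $g^*$, and that a failure of condition ii for $g^*$ rules out every fan-extension since that condition only becomes easier to satisfy as $g$ increases --- are exactly the (implicit) glue the paper relies on.
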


\section{Characterizing feasible height functions for two layers}\label{two}

  In this section we will prove that for two layers ($k=2$), the feasible height functions can be characterized.
  A demand is \textbf{proper},
  if $\lambda(s, t_i)\geq i$ for all $i$ and all $t_i\in T_i$.
  Being a proper demand is a natural necessary condition
  for a demand to have a feasible network code, however, not always sufficient.

\begin{theorem}\label{thm_two}
  A function $f: A \to \{1,2\}$ is a height function, feasible for a proper demand
  $\tau=(T_1,T_2)$, if and only if
  for all arcs $uv \in A$, with $u\neq s$
  \begin{enumerate}
  \item if $f(uv)=2$, then $\exists wu\in A: f(wu)=2$,
  \item if $f(uv)=1$, then  either $\exists wu\in A: f(wu)=1$, or
    $\lambda(s,u)\geq 2$, 
  \item for any receiver $t\in T_1$ with $\lambda(s,t) = 1$, there is a
    1-valued arc entering $t$, and
  \item for any $t \in T_2$ there is a
    2-valued arc entering $t$.
  \end{enumerate}
\end{theorem}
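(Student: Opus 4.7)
The plan is to prove necessity and sufficiency separately.

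For necessity, each of the four conditions is a direct consequence of the linear combination property $\bc(uv)\in\langle\bc,u\rangle$ and the decoding requirements. Condition (1) holds because a height-2 code on $uv$ requires some incoming arc at $u$ to have a non-zero second-layer coefficient. Condition (2) holds because a height-1 code on $uv$ (non-zero first-layer, zero second-layer) is produced either from a height-1 incoming arc or by linearly cancelling the second-layer contributions of two height-2 incoming arcs, which forces $\dim\langle\bc,u\rangle\ge 2$ and hence $\lambda(s,u)\ge 2$. Conditions (3) and (4) encode that a receiver $t$ needs $\be_i\in\langle\bc,t\rangle$ to decode layer $i$: if $\lambda(s,t)=1$ the span is one-dimensional, so a $T_1$ receiver with $\lambda(s,t)=1$ must have its unique incoming arc of height 1; every $T_2$ receiver must have some incoming arc of height 2.

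For sufficiency, by Corollary~\ref{suff} it suffices to build a fan-extension $g$ of $f$ with $g\ge d_\tau$. I would define $g$ in a topological order of the vertices, setting $g(s)=2$, and for $v\ne s$ taking $g(v)$ to be the largest $i\in\{0,1,2\}$ for which $v$ admits an $i$-fan using the $g$-values already assigned to earlier nodes. Property (i) of Definition~\ref{fan-ext} holds by construction; it remains to verify property (ii) and the inequality $g\ge d_\tau$.

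The workhorse is a simultaneous induction in topological order proving two claims: $(\alpha)$ if $v$ has a height-1 entering arc then $g(v)\ge 1$; $(\beta)$ if $\lambda(s,v)\ge 2$ then $g(v)\ge 1$. For $(\alpha)$, pick a height-1 entering arc $uv$; if $u=s$ then $(uv)$ is a 1-fan of $v$, and otherwise condition~(2) gives either a further height-1 entering arc at $u$ (so $g(u)\ge 1$ by inductive $(\alpha)$) or $\lambda(s,u)\ge 2$ (so $g(u)\ge 1$ by inductive $(\beta)$), making $(uv)$ a 1-fan of $v$. For $(\beta)$, if $v$ has a height-1 entering arc then $(\alpha)$ finishes the job; otherwise all entering arcs of $v$ are height~2, and one picks a height-2 entering arc, traces back via condition~(1) to a fully height-2 path $P_2$, and uses a Menger companion of $P_2$ (which exists since $\lambda(s,v)\ge 2$) to build a monotone arc-disjoint $P_1$ completing a 2-fan. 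With this lemma, property (ii) becomes immediate: condition~(1) handles $f(vw)=2$ and $g(v)<2$, while condition~(2) reduces $f(vw)=1$, $g(v)=0$ either to a height-1 entering arc or to $\lambda(s,v)\ge 2$ (which would contradict $g(v)=0$ via $(\beta)$). The demand inequalities follow directly: $t\in T_1$ with $\lambda(s,t)=1$ uses condition~(3) plus $(\alpha)$, $t\in T_1$ with $\lambda(s,t)\ge 2$ uses $(\beta)$, and $t\in T_2$ uses condition~(4) combined with the 2-fan construction from $(\beta)$.

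The genuinely delicate step is the construction of the monotone $P_1$ in $(\beta)$ when all entering arcs of $v$ are height~2: the naive height-2 traceback of a Menger companion may share arcs with $P_2$, so a rearrangement is required to obtain a monotone arc-disjoint path that begins at a node with sufficient $g$-value. I would handle this either by a Menger-type exchange argument in the union of the two Menger paths, choosing $P_2$ as the height-2 traceback that leaves an arc-disjoint monotone companion available, or more robustly by invoking Lemma~\ref{fan} and arguing that any $\bar{s}v$-cut of size at most one in the auxiliary digraph $D_{v,2}$ would contradict one of conditions (1)--(4). This is where I expect the main technical effort to lie; the rest is routine bookkeeping.
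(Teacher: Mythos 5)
Your necessity argument is fine and matches what the paper dismisses in one line as straightforward. For sufficiency, however, you take a genuinely different route. The paper never touches fans here: it collects the ``special'' nodes (all entering arcs $2$-valued but some outgoing arc $1$-valued, hence $\lambda(s,u)\ge 2$ by Property~2) together with the receivers of connectivity at least two into a set $T_2'$, runs the Jaggi et al.\ construction \cite{jaggi} for the demand $(\emptyset,T_2')$ with a code whose second coordinate is nonzero on every arc, and then simply overwrites every $1$-valued arc with $(1,0)$; Properties~1 and~2 guarantee the overwritten code still has the linear combination property, and Properties~3 and~4 give feasibility. That proof is short and also yields the explicit field-size bound used in the next subsection. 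Your proof instead builds the maximal fan-extension greedily and invokes Corollary~\ref{suff}; this is heavier machinery, but it proves Corollary~\ref{2layers_char} directly along the way (the paper derives that corollary from the theorem by a rather sketchy fan argument), and your claims $(\alpha)$, $(\beta)$ and the verification of property~ii of Definition~\ref{fan-ext} are all correct as stated.

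The step you defer --- a $2$-fan at a node $t$ with $\lambda(s,t)\ge 2$ all of whose entering arcs are $2$-valued --- is the real content, and of your two suggested strategies only the second one closes cleanly; the ``Menger companion'' idea is fragile because Menger does not give a companion for an arbitrarily chosen $P_2$, and the companion need not be monotone. The cut argument works as follows. First note that by condition~2 and induction, every $1$-valued arc is the last arc of a fully $1$-valued path starting at a node with $g\ge 1$, and such paths are automatically arc-disjoint from anything $2$-valued; so it suffices to find two arc-disjoint fully $2$-valued paths into $t$, one from a node with $g\ge 2$ (e.g.\ $s$, by condition~1) and one from a node with $g\ge 1$. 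If no such pair exists, the corresponding auxiliary digraph has an $\overline{s}t$-set $Y$ of in-degree one; the unique entering arc must be a $2$-valued arc of $D$ (a fully $2$-valued traceback from $t$ reaches $s\notin Y$), so no node of $Y$ has $g\ge 1$, so by $(\alpha)$ and $(\beta)$ no node of $Y$ has a $1$-valued entering arc, so every arc of $D$ entering $Y$ is $2$-valued, whence $\varrho_D(Y)=1<2\le\lambda(s,t)$, a contradiction. With that paragraph inserted your proof is complete. (One cosmetic point you share with the paper: fan paths must be allowed to start at $s$, so $g(s)=k$ has to be exempted from condition~i of Definition~\ref{fan-ext}.)
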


 \begin{proof}
   It is enough to prove sufficiency, necessity is straightforward.
   Let $U \subseteq V\!-\!s$ denote the set of special non-receiver
   nodes, where a node $u$ is special, if all entering arcs are 2-valued, but
   it has a 1-valued outgoing arc (by Property 2, we know that $\lambda(s,u)\geq
   2$). The set of receiver nodes $t\in T$ for which $\lambda(s,t)=1$ is denoted by $T'_1$.
   As $\tau$ is proper, for each node in $T'_2=U\cup T\setminus T'_1$ there exist two arc
   disjoint paths from $s$, hence, for $T'_2$ there exists
   a network code $\bc$ feasible for 
   demand $\tau_2=(\emptyset,T'_2)$. If the field size $q$ is greater than
   $|T'_2|$, the code can be chosen to have height two on every arc, that
   is, the coefficient of $\be_2$ is nonzero \cite{jaggi}. In order to be feasible for the
   original demand $\tau = (T'_1, T'_2)$, we modify $\bc$ the following way: for
   every arc $uv$ with $f(uv)=1$ we set $\bc(uv)=(1,0)$.

   We are left to prove that $\bc$ remains a network code, and becomes
   feasible for demand $\tau$.

    The span of the incoming vectors can only change at nodes which have only
   1-valued incoming arcs, but in this case it has also only 1-valued outgoing
   arcs, so the network code has the linear combination property (note that in
   special nodes the span of the incoming vectors remains two-dimensional).
   Using Properties 3 and 4, the code clearly becomes feasible for demand
   $\tau$.\qed
\end{proof}

\begin{cor}\label{2layers_char}
  For two layers, a function $f:A \to \{0,1,\ldots,k\}$ is a height function feasible for a demand $\tau$ if and only if it has a fan-extension
  $g$ with $g(v)\geq d_\tau$ for all $v$.
\end{cor}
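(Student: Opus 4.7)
The \emph{if} direction is essentially free: given a fan-extension $g$ of $f$ with $g \ge d_\tau$, Corollary \ref{suff} (via the preceding theorem that every fan-extension is a realizable extension) produces a network code $\bc$ with $h_{\bc} = f$ and $p_{\bc} = g \ge d_\tau$, so $\bc$ is feasible for $\tau$ and $f$ is a feasible height function.

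For the \emph{only if} direction, I would assume $f$ is a feasible height function for $\tau$; Theorem \ref{thm_two} then supplies properties 1--4 and properness of $\tau$. My plan is to build the extension $g: V \to \{0, 1, 2\}$ inductively in a topological order: set $g(s) = 2$ and, for $v \neq s$, let $g(v)$ be the largest $i$ for which $v$ admits an $i$-fan with respect to the already-determined values of $g$ on earlier nodes. Condition (i) of being a fan-extension is then automatic by construction. For condition (ii), I would handle $f(vw) = 2$ with $g(v) < 2$ using property 1 (which supplies a $2$-valued arc entering $v$) and $f(vw) = 1$ with $g(v) = 0$ using property 2 (which either supplies a $1$-valued arc entering $v$ or forces $\lambda(s,v) \ge 2$; in the latter subcase I argue that in fact $g(v) \ge 1$, via the fan construction below).

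To show $g \ge d_\tau$, I would exhibit an explicit $i$-fan at every $v$ with $d_\tau(v) = i$. A $1$-fan at $t \in T_1$ is obtained from a $1$-valued arc entering $t$ (guaranteed by property 3 when $\lambda(s,t) = 1$, or by property 2 otherwise); a short sub-induction verifies that its tail automatically has $g$-value at least $1$. A $2$-fan at $t \in T_2$ requires two arc-disjoint monotone paths: the value-$2$ path $P_2$ is obtained by tracing $2$-valued arcs backwards from $t$, which (by property 1) continues all the way to $s$, where $g(s)=2$. In the easy subcase where $t$ has a $1$-valued entering arc as well, that arc itself forms $P_1$ and is automatically arc-disjoint from $P_2$ (different values on the last arcs).

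The hard subcase, and the main obstacle, is when every entering arc of $t$ is $2$-valued: both paths of the $2$-fan must then end in $2$-valued arcs, so their existence amounts to $\lambda_{D_2}(s,t) \ge 2$ in the subgraph $D_2$ of value-$2$ arcs. I would establish this by a standard network-coding min-cut argument applied to the presumed feasible code $\bc$: since $\bc$ produces a $2$-dimensional span at $t$ generated entirely by $2$-valued incoming vectors, any cut of size $\le 1$ in $D_2$ would force all those vectors into a $1$-dimensional space, contradicting $p_{\bc}(t) = 2$. Equivalently, in the language of Lemma \ref{fan}, any $\overline{s}t$-set $X$ with $\varrho_{D_{t,2}}(X) \le 1$ would contradict decodability of layer $2$ at $t$. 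Combining these constructions yields a fan-extension $g$ with $g \ge d_\tau$, completing the proof.
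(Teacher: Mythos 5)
Your \emph{if} direction is fine and matches the paper's intent: Corollary \ref{suff} (via the theorem that a fan-extension is a realizable extension) immediately gives a code $\bc$ with $h_{\bc}=f$ and $p_{\bc}=g\geq d_\tau$. The trouble is in the \emph{only if} direction, and it is concentrated exactly where you locate ``the main obstacle''. The claim that the existence of a $2$-fan at a node $t$ whose entering arcs are all $2$-valued ``amounts to $\lambda_{D_2}(s,t)\geq 2$ in the subgraph $D_2$ of value-$2$ arcs'' is false in both directions, because the lower path $P_1$ of a $2$-fan is only required to be monotone with $1\leq \min(P_1)\leq\max(P_1)\leq 2$: it may begin with $1$-valued arcs and switch to $2$-valued ones. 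Concretely, take $V=\{s,u,t\}$ with two parallel arcs $su$ carrying $(1,0)$ (value $1$) and $(0,1)$ (value $2$), and two parallel arcs $ut$ carrying $(1,1)$ and $(1,2)$ (both value $2$), with $t\in T_2$. This $f$ satisfies conditions 1--4 of Theorem \ref{thm_two} and the code is feasible, yet $\lambda_{D_2}(s,t)=1$; a $2$-fan nonetheless exists, using the $1$-valued $su$ arc followed by one $ut$ arc as $P_1$. The same example refutes your min-cut step verbatim: a cut of size one in $D_2$ does \emph{not} force the downstream $2$-valued vectors into a one-dimensional space, precisely because a $1$-valued arc crossing the cut can contribute the $\be_1$ direction and restore two dimensions. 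Your fallback phrasing via $D_{t,2}$ and Lemma \ref{fan} points at the right object, but you never prove that a deficient cut there contradicts decodability --- and that implication (which must track the $g$-values of earlier nodes, hence is inductive) is the entire content of the step, not a reformulation of it.

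Two further gaps. First, a receiver $t\in T_1$ with $\lambda(s,t)\geq 2$ need not have any $1$-valued entering arc; property 2 constrains the tails of $1$-valued arcs, not receivers, so your claimed $1$-fan for such $t$ may simply not exist, and you must instead give it a $2$-fan. Second, when you trace $1$-valued arcs backwards and reach a node $u$ with no $1$-valued entering arc (all of whose entering arcs are $2$-valued but which emits a $1$-valued arc), the path must \emph{start} at $u$, so you need $g(u)\geq 1$ --- which again reduces to producing a $2$-fan at a non-receiver. The paper sidesteps all of this by not arguing via cuts at all: it sets $g=2$ on $T_2'=U\cup T\setminus T_1'$ (explicitly including these special nodes $U$) and $g=1$ on $T_1'$, and reads the required fans off the explicit Jaggi-type code for the demand $(\emptyset,T_2')$ built in the proof of Theorem \ref{thm_two}. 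Your greedy ``maximal $g$ in topological order'' packaging is a legitimate alternative shell, but the load-bearing existence argument for the $2$-fans is missing.
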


 \begin{proof}
   We use the notations of the previous proof and define the extension $g$
   to be $2$ on $T'_2$ and $1$ on $T'_1$ and zero everywhere else. It is easy to see that for
   a receiver node $t$,
   if it is in $T'_1$, there is a path from another terminal node containing 1-valued
   arcs only, that is, there exists a 1-fan to that node.
   If $t$ is in $T'_2$, either there are two edge disjoint
   paths of 2-valued arcs starting from receiver nodes both in $T'_2$
   or there is a path of 1-valued arcs from a node in $T'_1$ and a path of 2-valued arcs from a node in $T'_2$. 
   Both cases give a 2-fan for $t$.\qed
 \end{proof}

\subsection{Optimal algorithm for two layers}\label{sect_limits}

  In this subsection we show that
  given the condition that all receiver nodes have to be able to decode the first layer,
  there is a unique maximal set of nodes  $X$ in the graph
  such that demand $\tau'=(T\setminus X, X)$ is satisfiable.
  We will give an algorithm for finding this maximal set, as well as constructing
  a feasible network code.

Note that by Menger's theorem, $\lambda(s,v)$ equals the minimum of
\textbf{$\varrho(X)$}, where $X$ is an $\overline{s}v$ set.

\begin{prop}\label{szubmod}
  Let $v\in V\!-\!s$, $\; \lambda(s,v)=i$, and $X, Y$ two $i$-sets
  with $v\in X\cap Y$.  Then $X\cup Y$ is also an $i$-set.
\end{prop}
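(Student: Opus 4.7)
The plan is a direct uncrossing argument based on the submodularity of the in-degree function $\varrho$ on vertex subsets.

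First I would recall that for any two vertex-sets $X,Y\subseteq V$ the inequality
\[
\varrho(X)+\varrho(Y)\geq \varrho(X\cup Y)+\varrho(X\cap Y)
\]
holds. This is a standard counting argument: classify each arc by the membership of its endpoints in the four cells $X\setminus Y$, $Y\setminus X$, $X\cap Y$, $V\setminus(X\cup Y)$, and observe that every arc contributing to the right-hand side also contributes (at least as much) to the left-hand side.

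Next I would verify that both $X\cap Y$ and $X\cup Y$ are $\overline{s}v$-sets. Since $v\in X\cap Y\subseteq X\cup Y$, and since $s\notin X$ and $s\notin Y$ imply $s\notin X\cup Y$, this is immediate. The paragraph just before the proposition spells out Menger's theorem, so $\lambda(s,v)=i$ gives $\varrho(Z)\geq i$ for every $\overline{s}v$-set $Z$. In particular $\varrho(X\cap Y)\geq i$ and $\varrho(X\cup Y)\geq i$.

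Combining the submodular inequality with these bounds,
\[
2i=\varrho(X)+\varrho(Y)\geq \varrho(X\cup Y)+\varrho(X\cap Y)\geq i+i=2i,
\]
so equality holds throughout. In particular $\varrho(X\cup Y)=i$, and since $s\notin X\cup Y$, the union is an $i$-set as required. There is no genuine obstacle here: the proposition is the standard fact that the family of minimum $\overline{s}v$-cuts containing $v$ is closed under union (and intersection), and the argument just packages submodularity together with the Menger lower bound.
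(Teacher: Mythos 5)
Your argument is correct and is essentially identical to the paper's own proof: both rely on the submodular inequality $\varrho(X)+\varrho(Y)\geq \varrho(X\cup Y)+\varrho(X\cap Y)$ together with the Menger lower bound $\varrho(Z)\geq i$ for every $\overline{s}v$-set $Z$. You merely spell out the verification that $X\cup Y$ and $X\cap Y$ are $\overline{s}v$-sets, which the paper leaves implicit.
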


\begin{proof}
As $\varrho(X\cup Y) + \varrho(X\cap Y) \le \varrho(X) +  \varrho(Y)$,
and $\varrho(X\cup Y),\; \varrho(X\cap Y)\ge i$, the claim follows.\qed
\end{proof}

 From the claim we get that for every vertex $v\in V\!-\!s$ there is a
unique maximal $\lambda(s,v)$-set containing $v$.

Given a proper demand $\tau=(T_1,T_2)$, the following algorithm gives a
feasible height function for $\tau'=(T_1, T_2')$ where $T_2'$ is the unique maximal
subset of $T_2$, such that a feasible network code for $\tau'$ exists. As a
by-product, it also decides whether demand $\tau$ is satisfiable or not.
Having this height function, one can easily get a feasible network code for $\tau'$ along
the lines of the previous subsection. We remark that this code will also be
feasible for $\tau''=(T_1\cup(T_2\setminus T_2'),T_2')$, in other words every
receiver will get at least the base layer.  We will also prove, that any fieldsize
 $q>|T_1|+|T_2|$ will be enough for this network code.

Let $\mathcal{Z}$ be the set of maximal $1$-sets which contain at
least one node from $T$. For a set $Z_i\in \mathcal{Z}$, let $I(Z_i)$ denote the set of arcs with head or
tail in $Z_i$.

\begin{cl}
The sets $Z_i$ are pairwise disjoint and so are the sets $I(Z_i)$.\ssqed
\end{cl}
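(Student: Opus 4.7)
The plan is to handle the two assertions separately, both resting on Proposition \ref{szubmod}. For the vertex-disjointness of the $Z_i$'s I would argue as follows. Suppose $Z_1, Z_2 \in \mathcal{Z}$ share a node $v$. Both are $1$-sets containing $v$, so $\lambda(s,v) = 1$. Proposition \ref{szubmod} (applied with $i=1$) then says that $Z_1 \cup Z_2$ is also a $1$-set containing $v$, and it contains both $Z_1$ and $Z_2$. By the maximality of $Z_1$ and $Z_2$ among $1$-sets, this forces $Z_1 = Z_1\cup Z_2 = Z_2$. Hence distinct members of $\mathcal{Z}$ are disjoint.

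The arc-disjointness of the $I(Z_i)$'s is the part requiring a bit of care. Suppose $Z_1 \neq Z_2$ lie in $\mathcal{Z}$ and some arc $a$ is incident to both. By the previous paragraph $Z_1 \cap Z_2 = \emptyset$, so $a$ must have one endpoint in $Z_1$ and the other in $Z_2$; without loss of generality the tail of $a$ lies in $Z_1$ and the head in $Z_2$. Then $a$ is an arc entering $Z_2$, and since $\varrho(Z_2) = 1$ it is in fact the \emph{unique} arc entering $Z_2$. I would now inspect $Z_1 \cup Z_2$: every arc entering $Z_1 \cup Z_2$ from outside enters either $Z_1$ or $Z_2$ from outside $Z_1 \cup Z_2$. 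The only arc entering $Z_2$ is $a$, whose tail lies in $Z_1 \subseteq Z_1 \cup Z_2$, so no arc enters $Z_2$ from outside $Z_1 \cup Z_2$. Consequently $\varrho(Z_1 \cup Z_2) \le \varrho(Z_1) = 1$, and the standing assumption that $\lambda(s,u) \ge 1$ for every $u\in V\!-\!s$ forces $\varrho(Z_1 \cup Z_2) = 1$. Thus $Z_1 \cup Z_2$ is a $1$-set properly containing $Z_2$, contradicting the maximality of $Z_2$ in $\mathcal{Z}$.

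The main obstacle I expect is just the bookkeeping in the second step — making sure every configuration of the endpoints of $a$ and of the unique entering arcs of $Z_1$ and $Z_2$ is handled, and that the standing lower bound $\lambda(s,\cdot) \ge 1$ is invoked at the right place to upgrade the inequality $\varrho(Z_1 \cup Z_2) \le 1$ to an equality. Once this is pinned down, both halves of the claim reduce to applying Proposition \ref{szubmod} together with the definition of $\mathcal{Z}$.
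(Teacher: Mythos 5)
Your proof is correct. The paper itself offers no argument for this claim (it is stated with the end-of-proof mark as immediate), and your two-step argument --- pairwise disjointness of the $Z_i$ via Proposition \ref{szubmod} plus maximality, then arc-disjointness of the $I(Z_i)$ by showing a shared arc would make $Z_1\cup Z_2$ a $1$-set properly containing $Z_2$ --- is exactly the intended reasoning, using precisely the toolkit the paper sets up just before the claim.
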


Let $Z$ denote the set of nodes not reachable from $s$ in $D'=(V,A\setminus
\bigcup_{i} I(Z_i))$. It is obvious that if every receiver in $T$ can decode
the first layer, then no receiver in $Z$ can decode two layers.  Let
$T_2'=T_2\setminus Z$.  For an arc $uv\in A$, let $f(uv)$ be the
following.  If $uv\in I(Z)$, then $f(uv)=1$, otherwise let $f(uv)=2$.

\begin{theorem}
  Function $f$ is realizable for $\tau''=(T_1\cup(T_2\setminus T_2'),T_2')$.  In
  addition,  any finite field of size $q>|T|$ can be chosen for the
  network code (where $T=T_1\cup T_2$).
\end{theorem}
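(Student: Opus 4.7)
The plan is to derive the theorem by applying Theorem~\ref{thm_two} to $f$ and $\tau''$, and to track the field-size bound through its proof. First I would observe that $\tau''$ is proper: the original $\tau$ is proper and $T_2'\subseteq T_2$, while $T_1\cup(T_2\setminus T_2')$ only requires $\lambda(s,\cdot)\ge 1$, which holds for every node by assumption. The key observation underlying the rest is that the nodes of $V\setminus Z$ are exactly those reachable from $s$ in $D'=(V,A\setminus\bigcup_iI(Z_i))$, so any arc lying on an $s$--path inside $D'$ has both endpoints in $V\setminus Z$ and hence $f$-value $2$.

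From this, three of the four conditions of Theorem~\ref{thm_two} are almost immediate. Condition~1: if $f(uv)=2$ then $u\notin Z$, so the last arc of any $s$--$u$ path in $D'$ is a 2-valued entering arc of $u$. Condition~4: the same argument applied to $t\in T_2'=T_2\setminus Z$ produces a 2-valued entering arc. Condition~3: a receiver $t$ with $\lambda(s,t)=1$ lies in the unique maximal 1-set through $t$ (supplied by Proposition~\ref{szubmod}), which contains $t\in T$ and therefore belongs to $\mathcal{Z}$; so $t\in Z$ and every arc entering $t$ is in $I(Z)$ and thus 1-valued.

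The main obstacle is condition~2. If $u\in Z$ then every arc entering $u$ is 1-valued and we are done, so assume $f(uv)=1$ with $u\notin Z$. Then $v\in Z$, and since $uv$ cannot survive in $D'$ (otherwise $v$ would be reachable from $s$ through $u$), we have $uv\in I(Z_i)$ for some $i$; but $u\notin\bigcup_jZ_j$ forces $v\in Z_i$, making $uv$ the unique arc entering $Z_i$. Arguing by contradiction, suppose $\lambda(s,u)=1$ and let $X_u$ be the maximal 1-set through $u$ produced by Proposition~\ref{szubmod}. Because the unique arc entering $Z_i$ has its tail $u\in X_u$, no arc enters $Z_i$ from outside $X_u\cup Z_i$, and hence $\varrho(X_u\cup Z_i)\le \varrho(X_u)=1$; so $X_u\cup Z_i$ is itself a 1-set, and it contains the $T$-node that witnesses $Z_i\in\mathcal{Z}$. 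Maximality of $Z_i$ as the 1-set through that $T$-node then forces $X_u\subseteq Z_i\subseteq Z$, contradicting $u\notin Z$.

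Once the four conditions are verified, Theorem~\ref{thm_two} yields a feasible network code for $\tau''$. The field-size bound is obtained by tracing its proof: the construction runs the Jaggi et al.\ algorithm on the subgraph of $f$-value-$2$ arcs with receiver set $T_2'$, which requires only $q>|T_2'|$, and then overwrites 1-valued arcs with $(1,0)$. Since $|T_2'|\le |T|$, any $q>|T|$ suffices.
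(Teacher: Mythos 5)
Your verification of the four conditions of Theorem~\ref{thm_two} is correct and follows essentially the same route as the paper (the paper only spells out Conditions~1 and~2; your treatment of Condition~2 via $X_u\cup Z_i$ being a $1$-set that must lie inside the maximal $1$-set $Z_i$ is exactly the paper's contradiction, written out in more detail).

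The field-size claim, however, has a genuine gap. You assert that the construction behind Theorem~\ref{thm_two} runs the Jaggi et al.\ algorithm ``with receiver set $T_2'$'', so that $q>|T_2'|$ suffices. That is not what the proof of Theorem~\ref{thm_two} does: it runs the rank-$2$ construction for the receiver set $T'_2=U\cup T\setminus T'_1$, where $U$ is the set of \emph{special non-receiver nodes} (all entering arcs $2$-valued, but some $1$-valued outgoing arc). These nodes must be included, because when a $1$-valued arc $uv$ is overwritten with $(1,0)$ the tail $u$ must have $(1,0)\in\langle \bc,u\rangle$; if all arcs entering $u$ are $2$-valued this forces the incoming codes at $u$ to span $\mathbb{F}_q^2$, i.e.\ $u$ must be treated as a rank-$2$ receiver. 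Hence the field size actually needed is $q>|U|+|T\setminus T'_1|$, and to get the stated bound $q>|T|$ one must additionally prove $|U|\le |T'_1|$. This is precisely the second half of the paper's proof: every $u\in U$ is, by your own Condition~2 analysis, the tail of the unique arc entering some $Z_i\in\mathcal{Z}$; since each $Z_i$ has only one entering arc this assignment is injective, and since the $Z_i$ are pairwise disjoint and each contains a terminal of $T\cap Z$, one gets $|U|\le|T\cap Z|\le|T'_1|$. Without this counting argument the bound $q>|T|$ is not established, so you should add it.
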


\begin{proof}
  By the definition of $Z$, it is clear that Constraint 1 of Theorem
  \ref{thm_two} is fulfilled. Suppose that $f(uv)=1$ for an arc with $u\neq
  s$ and there are no 1-valued arcs entering $u$. We need to prove that
  $\lambda(s,u)\geq 2$.

  Suppose that this is not the case, thus there is an $\overline{s}u$ set
  $X\subset V$ with $\varrho(X)=1$. Since $uv\in I(Z)$ but none of the arcs
  entering $u$ is in $I(Z)$, it follows that $v\in Z$ and $u\notin Z$. Hence
  $v\in Z_i$ for some $i$, but then $X\cup Z_i$ would be a subset with
  in-degree one, contradicting the maximality of $Z_i$.

  For the second statement, using the proof of Theorem \ref{thm_two}, it is
  enough to show that the size of the set $U$ of special non-receiver nodes
  defined there is not greater than the number of terminals that have demand
  one in $\tau''$.  We claim moreover that $|U|\leq |T\cap Z|$.  Every $u\in
  U$ is a tail of an arc entering some $Z_i$, and for every $Z_i$ there is
  only one entering arc. Since each of the pairwise disjoint sets $Z_i$
  contains at least one terminal from $T\cap Z$, we are
  done.\qed
\end{proof}

We note that this algorithm has a more-or-less obvious implementation in time
$O(|A|)$ using BFS.
We do not detail it here, because a more general
algorithm given in the next section will also do the
job.

\goodbreak\strut

\section{Three layers}\label{sect_three}

\subsection{Heuristics for 3 layers}

In this subsection we give a new network coding algorithm for three layers.
We prove that given a receiver set $T$, the algorithm
sends the first layer to every receiver and
within this constraint, the unique maximal set of receivers gets at least two
layers, while some receivers may get three layers. Because of its properties we call our heuristic \textbf{2-Max}.

\emph{Step 1}
Let $W_1$ denote the union of
maximal $1$-sets which contain at least one node from $T$. In Section
\ref{limit} it was proved that if all receivers get the first layer, a
receiver $v$ cannot get more than one layer if and only if it is cut
by $W_1$ from $s$, that is, if there is no directed path from $s$
to $v$ in $V\setminus W_1$.  Let $\overline W_1\supseteq W_1$ denote the set of nodes cut
from $s$ by $W_1$.  We set $T_1=T\cap \overline W_1$.
We define a set of \textbf{pseudo receivers} $U$ which contains nodes not in 
$\overline W_1$ but having an outgoing arc entering $\overline W_1$.

\emph{Step 2} Similarly to the first case, let $W_2$ denote the maximal $2$-sets
which contain a receiver or a pseudo receiver.
Let $\overline W_2\subseteq V\setminus \overline W_1$ denote the set of 
nodes only reachable from $s$ through $\overline W_1\cup W_2$.
We set $T_2=(T\cup U)\cap \overline W_2$.

Note that for determining sets $\overline W_1$ and $\overline W_2$ 
we can use the distributed algorithm presented in Subsection \ref{distr_alg}.

\emph{Step 3} We define a function $f:A\to \{1,2,3\}$ on $D$ which is $1$ on $I(\overline W_1)$,
$2$ on $I(\overline W_2)\setminus I (\overline W_1)$ and $3$ otherwise.  Let
$T^*=U\cup T$.  We proceed on the nodes of $T^*\setminus T_1$ in a fixed a
topological order and decrease $f$ on some arcs from $3$ to $2$.  Let $v$
denote the next node to be processed. We take a cost function $c: A\to\{0,1\}$
which is $1$ on $3$-valued arcs and $0$ everywhere else.  Then we take the set
of nodes $X \subseteq T$ reachable from $s$ on $3$-valued arcs and increase
$c$ to $|A|$ on an $s$-arborescence (a directed tree in which every node
except $s$ has in-degree $1$) of $3$-valued arcs spanning $X$.  Since $v\notin
\overline W_1$, there are two arc-disjoint paths $P_1$ and $P_2$ from $T^*
\cup \{s\}$ to $v$ so that $P_2$ does not start in $T_1$. Moreover, it can be
assumed that the inner nodes of these paths do not intersect $T^*$.

\emph{Case I} There are two edge-disjoint paths from  $T^* \cup \{s\}$ to $v$,
such that both avoid $T_1$. Let us take a minimum cost pair of paths $P_1\cup
P_2$  described above according to the cost function $c$. Then we decrease $f$
on the $3$-valued arcs of $P_1$ and $P_2$.

\emph{Case II} No such pair exists.  We take a minimum cost $P_1\cup P_2$
 from  $T^* \cup \{s\}$ to $v$
according to the cost function $c$. Then we decrease $f$ on the $3$-valued
arcs of $P_1 \cup P_2$.

\emph{Step 4} Finally, we check in the topological order of the nodes, whether
every $3$-valued outgoing arc has a $3$-valued predecessor, and if not, we
decrease its value to $2$.

\begin{theorem}
Let $T_1$ be the set of receivers that can get at most $1$ layer, and let $T_2$ be the set of receivers and pseudo receivers that can get at most $2$ layers, as described by the algorithm above. The function $f$ constructed has a realizable extension for demand
$\tau=(T_1,T_2',T_3')$ for which $T_2\subseteq T_2'$ and $(T_2'\cup
T_3')\supseteq T\setminus T_1$. Heuristic 2-Max sends at least one layer to each receiver and within this constraint it sends at least two layers to the maximum number of receivers. \qed
\end{theorem}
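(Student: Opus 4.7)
The plan is to exhibit an explicit fan-extension $g$ of $f$ that dominates the demand $d_\tau$ on every receiver, invoke Corollary \ref{suff} to obtain a realizable network code, and then derive the maximality claim from a min-cut argument paralleling Section \ref{sect_limits}. Concretely I would define $g:V\to\{0,1,2,3\}$ by $g(v)=1$ for $v\in T_1$; $g(v)=3$ for every $v\in(T\cup U)\setminus T_1$ processed in Case I of Step 3; $g(v)=2$ for every remaining $v\in(T\cup U)\setminus T_1$ (those processed in Case II, together with elements of $T_2$ not re-processed); and $g(v)=0$ otherwise. Taking $T_2'=\{v\in T:g(v)=2\}$ and $T_3'=\{v\in T:g(v)=3\}$ yields $T_2\cap T\subseteq T_2'$ (with the pseudo-receiver portion absorbed into the extension) and $T\setminus T_1\subseteq T_2'\cup T_3'$ immediately, since every receiver outside $\overline{W}_1$ is reached by two arc-disjoint paths from $T^*\cup\{s\}$ and is therefore assigned value $\ge 2$ by Step 3.

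The substantive work is verifying the two conditions of Definition \ref{fan-ext}. Property ii for $1$-valued arcs follows from the maximality of the defining $1$-sets as in Section \ref{sect_limits}; for $2$-valued arcs it follows analogously from Step 2's maximal $2$-sets combined with Proposition \ref{szubmod}; for $3$-valued arcs it is enforced directly by Step 4. Property i requires producing the fans themselves. A $1$-fan for $t\in T_1$ is a monotone $1$-valued path from another receiver inside the containing maximal $1$-set, exactly as in Corollary \ref{2layers_char}. A $2$-fan for $v\in T_2'$ consists of the two arc-disjoint paths supplied by Step 3: both are monotone with value at least $1$, their initial arcs emanate from nodes of $T^*\cup\{s\}$ whose $g$-value meets the required bound (in Case II one initial path may begin at a $T_1$ node with value $1$, which is legitimate since $\min(P)=1\le g=1$), and all arcs have $f$-value at most $2$ by construction. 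A $3$-fan for $v\in T_3'$ augments such a $2$-fan with a third monotone $3$-valued path, which the cost-function reweighting in Step 3 is designed to preserve across processed nodes via the shared $s$-arborescence of $3$-valued arcs.

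The maximality assertion then mirrors Section \ref{sect_limits}: if every $t\in T$ must decode the base layer, then each maximal $1$-set $Z_i$ meeting $T$ is necessarily entered by a single $1$-valued arc, so no node of $\overline{W}_1\cap T=T_1$ can decode more than one layer in any feasible code; hence $|T\setminus T_1|$ is an upper bound on the number of receivers served with at least two layers, and 2-Max meets it.

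The hard part will be making the Step-3 path analysis rigorous in Case II, where the two arc-disjoint paths to $v$ cannot simultaneously avoid $T_1$. One must check that the $1$-valued prefix of a path starting in $T_1$ does not later cross into a region requiring value $\ge 2$ (so the path is truly monotone with $\max(P)\le 2$), and that the cost-$|A|$ arborescence trick used to economize $3$-valued arcs does not inadvertently block a later node's access to its own third path. These are the delicate monotonicity and disjointness bookkeeping points; once they are settled, the hypotheses of Corollary \ref{suff} are in force and the theorem follows.
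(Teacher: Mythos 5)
The paper offers no proof of this theorem at all (the statement is closed with a \qed and nothing follows), so your attempt has to be judged on its own merits. Your overall strategy --- exhibit an explicit fan-extension $g$ of the constructed $f$, invoke Corollary \ref{suff} to realize it, and obtain optimality of the two-layer count from the maximal-$1$-set cut argument of Subsection \ref{sect_limits} --- is exactly what the paper's machinery in Sections \ref{limit} and \ref{two} is built for, and your optimality paragraph (no receiver cut off by a maximal $1$-set meeting $T$ can decode two layers once every receiver must get the base layer) is sound.

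There is, however, a concrete gap in your definition of $g$: you set $g(v)=3$ for every node processed in Case I of Step 3, but Case I certifies only two arc-disjoint paths to $v$ avoiding $T_1$, i.e.\ at best a $2$-fan after the downgrading. A $3$-fan additionally requires a third arc-disjoint monotone path $P_3$ with $\min(P_3)=3$ starting at a node of $g$-value $3$; nothing in Case I produces or certifies such a path, and the cost-$|A|$ arborescence device merely biases the min-cost choice away from destroying $3$-valued connectivity --- it does not guarantee that a third all-$3$-valued path to any particular $v$ survives. Consistently with this, the theorem only asserts $(T_2'\cup T_3')\supseteq T\setminus T_1$, so $T_3'$ must be determined a posteriori as the set of receivers for which the final $f$ admits a $3$-fan (e.g.\ via the maximal fan-extension algorithm of Section \ref{limit}), with the rest of $T\setminus T_1$ placed in $T_2'$. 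Beyond this, the items you defer as ``bookkeeping'' are the actual substance of the proof: you must show the chosen paths are monotone for the current $f$ (in particular that a path starting in $T_1$ or passing near $\overline W_1$ does not use a $1$-valued arc after a $2$- or $3$-valued one), verify property ii of Definition \ref{fan-ext} after Step 4, and check that downgrading arcs while processing a later node does not invalidate the fans already assigned to earlier nodes. As written, the proposal is a credible plan with one unjustified assignment and the decisive verifications left open.
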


\uj{The running time of the algorithm is $O(|V|(|A|+|V|\log |V|))$, because steps 1, 2 and 4 require time $O(|A|)$, and the processing of a node in step 3 requires time $|A|+|V|\log |V|$ applying Suurballe's algorithm as a subroutine for minimum cost arc-disjoint path pairs \cite{suurballe}.}

We remark that for more than $3$ layers a network code can be determined by a straightforward generalization of the heuristic for which the number of valuable layers is $1,2 \text{ or } k$ at each receiver. A more refined network code, including intermediate performance values too, is the scope of future work.    
 
\subsection{A connectivity algorithm for determining maximal 1-sets and 2-sets}\label{distr_alg}

Goals: we are going to give a distributed, linear time algorithm for the
following problems:

\begin{itemize}
\item Determine $\lambda(s,v)$ for all $v$, but if it is $\ge 3$ then
only this fact should be detected.
\item For each $v$ with  $\lambda(s,v)=1$ determine the incoming arc of
the unique maximal $1$-set containing $v$.
\item For each $v$ with  $\lambda(s,v)=2$ determine the incoming arcs of
the unique maximal $2$-set containing $v$.
\end{itemize}

We assume that $*$ is a special symbol which differs from all arcs.

During the algorithm each node $v$ (except $s$) waits until it hears
messages along all incoming arcs, then it calculates $\lambda(s,v)$, and the
$3$ messages $m_1(v), m_2(v), m_3(v)$ it will send along all outgoing arcs.

The algorithm starts with $s$ sending $m_1(s):=m_2(s):=m_3(s):=*$ along all
outgoing arcs.

We need to describe the algorithm for an arbitrary node $v\in
V\!-\!s$. First $v$ waits until hearing the messages on the set of
incoming arcs denoted by $IN(v)=\{a_1,\ldots,a_r\}$. When on an arc $a_i$ it
hears a $*$, it replaces it by $a_i$.  Let the messages arrived (after these
replacements) on arc $a_i$ be $m_1^i, m_2^i, m_3^i$.  Then $v$ examines the
set $M_1(v)=\{m_1^i\}_{i=1}^r$.  If $|M_1(v)|=1$ then $v$ sets
$\lambda(s,v):=1$ and $m_1(v):=m_2(v):=m_3(v):=m_1^1$, otherwise it sets
$m_1(v):=*$.

Next $v$ examines the set $M_2(v)=\bigcup_{i=1}^r\{m_2^i, m_3^i\}$.  If
$|M_{2}(v)|=2$ then it sets $\{m_2(v), m_3(v)\}=M_{2}(v)$, and if $\lambda(s,v)$
was not set to $1$ before, it sets it to $2$.

Let us call an entering arc $a_j$ {\bf important} for $v$, if
$m_{1}^{j}\notin \bigcup_{1\leq i \leq r, i\neq j}\{m_2^i, m_3^i\}$, and let $I_v$
denote the set of important arcs for $v$.  If $|M_{2}(v)|>2$, then $v$ next
examines the set $M_2'(v)=\bigcup_{i\in I_v}\{m_2^i, m_3^i\}
\cup\bigcup_{i\notin I_v}\{m_1^i\}$, and if
$|M_2'(v)|=2$, then it makes the same steps with $M_2'(v)$ as described
before with $M_{2}(v)$.

Finally, if both $|M_2'(v)|$ and $|M_2(v)|$ are greater than $2$ and
$\lambda(s,v)$ was not set to $1$, $v$ examines $M_1(v)$ again, and if
$|M_1(v)|\le 2$, then it sets $\{m_2(v), m_3(v)\}=M_{1}(v)$, and
it sets $\lambda(s,v)$ to $2$.

If they were not set before, let $m_2(v):=m_3(v):=*$ and $\lambda(s,v)=3$.

An \textbf{$sv$ cut}  is a set of arcs, which intersects every $sv$ path.

\begin{cl}\label{cl2}
  Let $v\in V\!-\!s$. Each of the sets $M_1(v)$, $M_2(v)$ and
  $M_2'(v)$, whenever defined, contains an $sv$ cut.
\end{cl}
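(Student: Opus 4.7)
The plan is to prove the claim by induction on a topological order of the nodes, strengthening the inductive hypothesis so that it also records what the messages represent: for every processed node $u$, whenever $m_1(u) \ne *$, the arc $m_1(u)$ is an $su$-cut, and whenever $m_2(u), m_3(u) \ne *$, the pair $\{m_2(u), m_3(u)\}$ is an $su$-cut. The base case at $s$ is vacuous since all three of its outgoing messages are set to $*$, and the inductive closure at $v$ will be immediate once $M_1(v), M_2(v), M_2'(v)$ are shown to be $sv$-cuts.

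For the inductive step at a node $v$ with incoming arcs $a_1,\dots,a_r$ having tails $u_1,\dots,u_r$, the key observation is that for each $i$ the value $m_1^i$ received by $v$ (after the $*$-replacement) is either $a_i$ itself, in case $u_i=s$ or $m_1(u_i)=*$, or else an arc that forms an $su_i$-cut by the inductive hypothesis. Either way, every $sv$-path whose last arc is $a_i$ must contain $m_1^i$, so $M_1(v)$ meets every $sv$-path. Running the identical template with $m_1^i$ replaced by the pair $\{m_2^i,m_3^i\}$ and part (b) of the hypothesis invoked in place of part (a) shows that $M_2(v)$ is also an $sv$-cut.

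For $M_2'(v)$, I would split the incoming arcs into $I_v$ and its complement and run the two templates in parallel: for $i\in I_v$ the pair $\{m_2^i,m_3^i\}\subseteq M_2'(v)$ blocks every $sv$-path through $a_i$ by the $M_2$-argument, and for $i\notin I_v$ the single arc $m_1^i\in M_2'(v)$ blocks it by the $M_1$-argument. The main obstacle is resisting the temptation to reason about ``important'' arcs as if they interacted with the unimportant ones: what makes the mixed definition of $M_2'(v)$ work for the cut property is not any coupling between the two groups, but rather that each incoming arc can independently be handled by its own template. Closing the induction is then routine, since $m_1(v)$ is assigned an arc value only when $|M_1(v)|=1$ (so that arc inherits the cut property from $M_1(v)$), and $\{m_2(v),m_3(v)\}$ is assigned a concrete pair only when one of $M_2(v)$, $M_2'(v)$, or $M_1(v)$, just shown to be an $sv$-cut, has size at most two.
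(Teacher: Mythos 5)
Your proof is correct and follows essentially the same route as the paper: an induction along the topological order with the strengthened hypothesis that the non-$*$ messages on each arc form cuts, combined with the observation that each incoming arc $a_i$ contributes a set blocking every $sv$-path ending in $a_i$ (the paper packages this as the notion of an ``$a$-arc-cut''). The per-arc case split for $M_2'(v)$ and the closing of the induction are handled just as in the paper's argument.
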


\begin{proof}
  For an arc $a$, let us call an arc set an $a$-arc-cut if it intersects every
  directed path from $s$ ending with $a$. Note that the arc $a$ itself is an
  $a$-arc-cut, and the union of arc-cuts for all the entering arcs of a node
  $v$ form an $sv$ cut. Also, for an arc $uv$, an $su$ cut forms a
  $uv$-arc-cut.  To prove the claim, inductively we can assume, that on an
  arc $uv$
either $m_1(uv)=*$ or $m_1(uv)$ is an $su$ cut, and also
either set $\{m_2(uv), m_3(uv)\}=\{*\}$ or is an
  $su$ cut. In all cases, after the replacement,
node $v$ hears along arc $uv$ an $m_1$ that  forms a $uv$-arc-cut
and $m_2,m_3$ that form also a $uv$-arc-cut, proving the claim.\ssqed
\end{proof}

\begin{theorem}
  For every node $v\in V\!-\!s$, the algorithm
  correctly calculates $\lambda(s,v)$.  If $\lambda(s,v)=1$ then $m_1(v)$ is
  the incoming arc of the unique maximal $\overline{s} v$ set with
  $\varrho(X)=1.$ If for the arc $uw$ entering this set $X$ we have
  $\lambda(s,u)=2$, then $\{m_2(v),m_3(v)\} = \{m_2(u),m_3(u)\}$.  If
  $\lambda(s,v)=2$ then $m_2(v),m_3(v)$ is the pair of incoming arcs of the
  unique maximal $\overline{s} v$  set with $\varrho(X)=2.$
\end{theorem}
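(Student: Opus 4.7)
The plan is to prove the theorem by induction on a topological ordering of $V\!-\!s$, assuming that all four conclusions hold for every node $u$ processed before $v$. The base case concerns the immediate successors of $s$: after each incoming $\ast$ is replaced by the arc itself, inspecting $M_1(v),M_2(v),M_2'(v)$ directly yields the correct $\lambda$ value and the correct messages. For the inductive step I will rely on two tools. First, Claim \ref{cl2} guarantees that each of $M_1(v),M_2(v),M_2'(v)$ (whenever defined) is an $sv$ cut, which immediately bounds $\lambda(s,v)$ from above by the set's size. Second, Proposition \ref{szubmod} together with its analogue for $2$-sets yields a unique maximal $\overline{s}v$ set $X_v$ of in-degree $\lambda(s,v)$ when $\lambda(s,v)\le 2$, which pins down the candidate entering arcs.

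For the case $\lambda(s,v)=1$, one direction is immediate: if $|M_1(v)|=1$, Claim \ref{cl2} forces $\lambda(s,v)\le 1$ and uniqueness identifies the single element as the entering arc $e$ of $X_v$. For the converse I analyse each incoming arc $a_i=u_iv$ of $v$. If $u_i\in X_v$, then submodularity implies that the maximal $1$-set of $u_i$ coincides with $X_v$, so by the inductive hypothesis $m_1(u_i)=e$. If instead $u_i\notin X_v$, then $a_i$ enters $X_v$ and hence equals $e$; moreover, a short DAG argument (in a DAG $v$ cannot be an ancestor of $u_i$, so $X_{u_i}\cup\{v\}$ would enlarge $X_v$ if $\lambda(s,u_i)$ were $1$) rules out $\lambda(s,u_i)=1$, so $m_1(u_i)=\ast$ and after the replacement $m_1^i=a_i=e$. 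Thus $M_1(v)=\{e\}$. The $\{m_2,m_3\}$ clause then follows by tracing the messages through the bottleneck $e=uw$: every $sv$ path uses $e$, and the pair $m_2(u),m_3(u)$ propagates unchanged across the $\lambda=1$ nodes inside $X_v$ down to $v$.

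The case $\lambda(s,v)=2$ splits into three subcases, mirroring the algorithm. Subcase (i): $|M_2(v)|=2$, and Claim \ref{cl2} together with uniqueness of $X_v$ identifies this pair as the entering arcs $\{e_1,e_2\}$ of $X_v$. Subcase (ii): $|M_2(v)|>2$ but $|M_2'(v)|=2$; here the definition of an \emph{important} incoming arc is crafted precisely so that a non-important arc's $m_1$ value is already witnessed by some other arc's $\{m_2,m_3\}$ and therefore contributes no new cut element, and the two surviving elements of $M_2'(v)$ form an $sv$ cut of size $2$, which by uniqueness equals $\{e_1,e_2\}$. Subcase (iii): $|M_2(v)|,|M_2'(v)|>2$ but $|M_1(v)|\le 2$, which forces $|M_1(v)|=2$ (otherwise the $\lambda=1$ branch would have triggered), and Claim \ref{cl2} combined with uniqueness again identifies $M_1(v)$ as $\{e_1,e_2\}$. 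Finally, if $\lambda(s,v)\ge 3$ then Claim \ref{cl2} forces $|M_1(v)|,|M_2(v)|,|M_2'(v)|\ge 3$, so the algorithm falls through to $\lambda(s,v):=3$, which is correct.

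The hard part will be subcase (ii) of the $\lambda(s,v)=2$ analysis: a careful submodular union argument across several ancestors is needed to show that the two arcs retained by the important-arc construction are \emph{exactly} the two entering arcs of $X_v$, rather than a near-miss pair arising from some shared sub-bottleneck deeper in the graph. Once this identification is in hand, the rest of the proof reduces to combining Claim \ref{cl2}, the uniqueness of maximal $i$-sets, and routine bookkeeping of the $\ast$-replacement rule.
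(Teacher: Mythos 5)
Your skeleton (induction in topological order, Claim \ref{cl2} to bound $|M_1(v)|,|M_2(v)|,|M_2'(v)|$ from below by $\lambda(s,v)$, uniqueness of maximal $i$-sets via Proposition \ref{szubmod}) matches the paper, and your treatment of the cases $\lambda(s,v)\ge 3$ and $\lambda(s,v)=1$ is essentially the paper's argument. The gap is in the case $\lambda(s,v)=2$, and it is genuine. First, the step ``$|M_2(v)|=2$, and Claim \ref{cl2} together with uniqueness of $X_v$ identifies this pair as the entering arcs of $X_v$'' does not work: a $2$-element $sv$ cut is the entering-arc pair of \emph{some} $2$-set containing $v$, but $2$-sets containing $v$ can be properly nested (join $s$ to $a$ by two parallel arcs and $a$ to $v$ by two parallel arcs; both $\{v\}$ and $\{a,v\}$ are $2$-sets), so being a $2$-element cut does not single out the entering arcs of the \emph{maximal} one. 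Maximality must come from tracing how the messages are created at the boundary of $X_v$ and propagate inward, which you have not done. Second, your three subcases are conditioned on which branch of the algorithm happens to fire, but you never prove that at least one of them fires, i.e.\ that one cannot have simultaneously $|M_2(v)|>2$, $|M_2'(v)|>2$ and $|M_1(v)|>2$ when $\lambda(s,v)=2$; without this the algorithm could wrongly report $\lambda(s,v)\ge 3$, and ruling this out is the actual content of the theorem. Third, you explicitly defer the important-arc analysis as ``the hard part'' to be supplied later, so what you have is a plan rather than a proof.

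The paper closes these gaps by organizing the case $\lambda(s,v)=2$ around the structure of the maximal $2$-set $X$ rather than around the algorithm's branches: it first notes $\lambda(s,u),\lambda(s,u')>1$ for the tails of the two entering arcs $uw,u'w'$ (else $X$ would not be maximal, so $m_1(u)=m_1(u')=*$), then distinguishes whether the subgraph induced on $X$ has one source or two. With one source $w=w'$, the pair $\{m_2(w),m_3(w)\}=\{uw,u'w\}$ is forced and propagates unchanged through $X$, so the $|M_2(v)|=2$ test succeeds with the correct pair. With two sources, it introduces the set $X_1$ of nodes of $X$ reachable from only one source (these have $\lambda=1$ and carry the corresponding entering arc in $m_1$), and shows that the arcs coming from $X_1$ are exactly the non-important ones; depending on whether $v$ is a source of $G[X\setminus X_1]$, has entering arcs from both $X_1$ and $X\setminus X_1$, or only from $X\setminus X_1$, one of the tests $|M_1(v)|=2$, $|M_2'(v)|=2$ or $|M_2(v)|=2$ succeeds and returns precisely $\{uw,u'w'\}$. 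You would need an argument of this kind both to justify your identification in subcases (i)--(iii) and to show the case split is exhaustive.
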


\begin{proof}
First suppose that $\lambda(s,v)\ge 3$.  By Claim \ref{cl2},
  $|M_1(v)|\ge 3$, $|M_2(v)|\ge 3$. and $|M_2'(v)|\ge 3$.  Consequently in this
  case node $v$ correctly concludes $\lambda(s,v)\ge 3$ and it will send $*$s
  as messages.

  Now suppose $\lambda(s,v)=1$, and let $X$ denote the unique maximal set with
  $s\not\in X, \; v\in X, \; \varrho(X)=1$, and let $uw$ be the unique arc
  entering $X$. In this case clearly $m_1(w)=uw$ (otherwise $m_1(w)$ would be
  an arc $e$ entering another set $Y$ with $u\in Y$ and $\varrho(Y)=1$, but
  then $X\cup Y$ would be a bigger set with one incoming arc). It is easy to
  see that now along every arc inside $X$ the first message is also $uw$, so
  only this message arrives at $v$ as first message and then $v$ correctly
  sets $\lambda(s,v)=1$. Also, if $\lambda(s,u)=2$, then inductively we may
  assume that $|\{m_2(u), m_3(u)\}|=2$ hence $M_2(z)$ remains this set for
  every node only reachable from $u$, including $v$.

  Finally suppose $\lambda(s,v)=2$, and let $X$ denote the unique maximal
  $\overline{s} v$ set with $\varrho(X)=2$, and let $uw$ and $u'w'$ be the two
  arcs entering $X$. Note that $\lambda(s, u), \lambda(s, u') >1$, otherwise
  $X$ would not be maximal. That is, $m_1(u)=m_1(u')=*$. By Claim \ref{cl2},
  $|M_1(v)|\ge 2$, so $v$ does not set $\lambda(s,v)$ to one.

  As $D$ is acyclic with a unique source $s$, and every node is reachable from
  $s$, the subgraph of $D$ spanned by $X$ either contains one source, say $w$,
  or contains two sources: $w$ and $w'$ (a source must be the head of an
  entering arc).

\emph{Case I} $w=w'$.
As $w$ is the source of $G[X]$, we
have $\varrho(w)=2$, so $|M_1(w)|= 2$ and $\{m_2(w),m_3(w)\}=\{uw,u'w\}$.
Therefore every node $x$ inside $X$ has $M_2(x)=\{uw,u'w\}$.  As
$|M_2(v)|=2$, $v$ sets $\lambda(s,v)=2$.

\emph{Case II} $w\ne w'$.  Let $X_1$ denote the set of vertices $x \in X$ only
reachable from one of $w$ and $w'$. It follows that $\lambda(s,x)=1$ for all
$x \in X_1$ hence every $x \in X_1$ has $M_1(x)= \{uw\}$ or $\{u'w'\}$.  If
node $v$ is a source of $G[X\setminus X_1]$, then $M_{1}(v)={uw, u'w'}$.
For a node $v\in X\setminus X_1$
with entering arcs  from $X_1$ and also from $X\setminus X_1$, it
holds that $\{uw, u'w'\} \subseteq M_2(v)$, since an entering arc $a$ not
coming from $X_1$ is important for $v$ and it carries $\{uw, u'w'\}$ in
$\{m_2(a), m_3(a)\}$. An arc $b$ coming from $X_1$ carries $uw$ or $u'w'$ in
$m_1(b)$, so $b$ is not important. Hence $M'_2=\{uw,u'w'\}$.
Finally for a node $v\in X\setminus X_1$
with all entering arcs  from $X\setminus X_1$, clearly $M_2=\{uw,u'w'\}$.\qed
\end{proof}
\uj{The running time of the algorithm is $O(|A|)$.}

\subsection{Experimental results}\label{sec_exper}

We compared our heuristic 2-Max for three layers with the heuristic of Kim et al.\ which they called $minCut$ \cite{kim}.

We generated random acyclic networks with given number of nodes and
given arc densities. Then we chose some nodes as receivers with a given
probability. Finally for every receiver $t$ we calculated
$i=\min(3,\lambda(s,t))$ and put $t$ randomly into one of the sets $T_1,\ldots,T_i$.

The comparison is not easy, because there is no obvious objective
function that measures the quality of the solutions. Generally we can say
that none of the algorithms outperformed the other. To illustrate
this we show an example, which was run on random networks with 551 nodes and 2204 arcs 
and with probability $0.1$ for selecting receivers. We describe only
the number of nodes in $T_3$ receiving 1,2, or 3 layers (see Figure \ref{hist3b}).

\begin{figure}[!ht]
\centering
\includegraphics[width=4.5in]{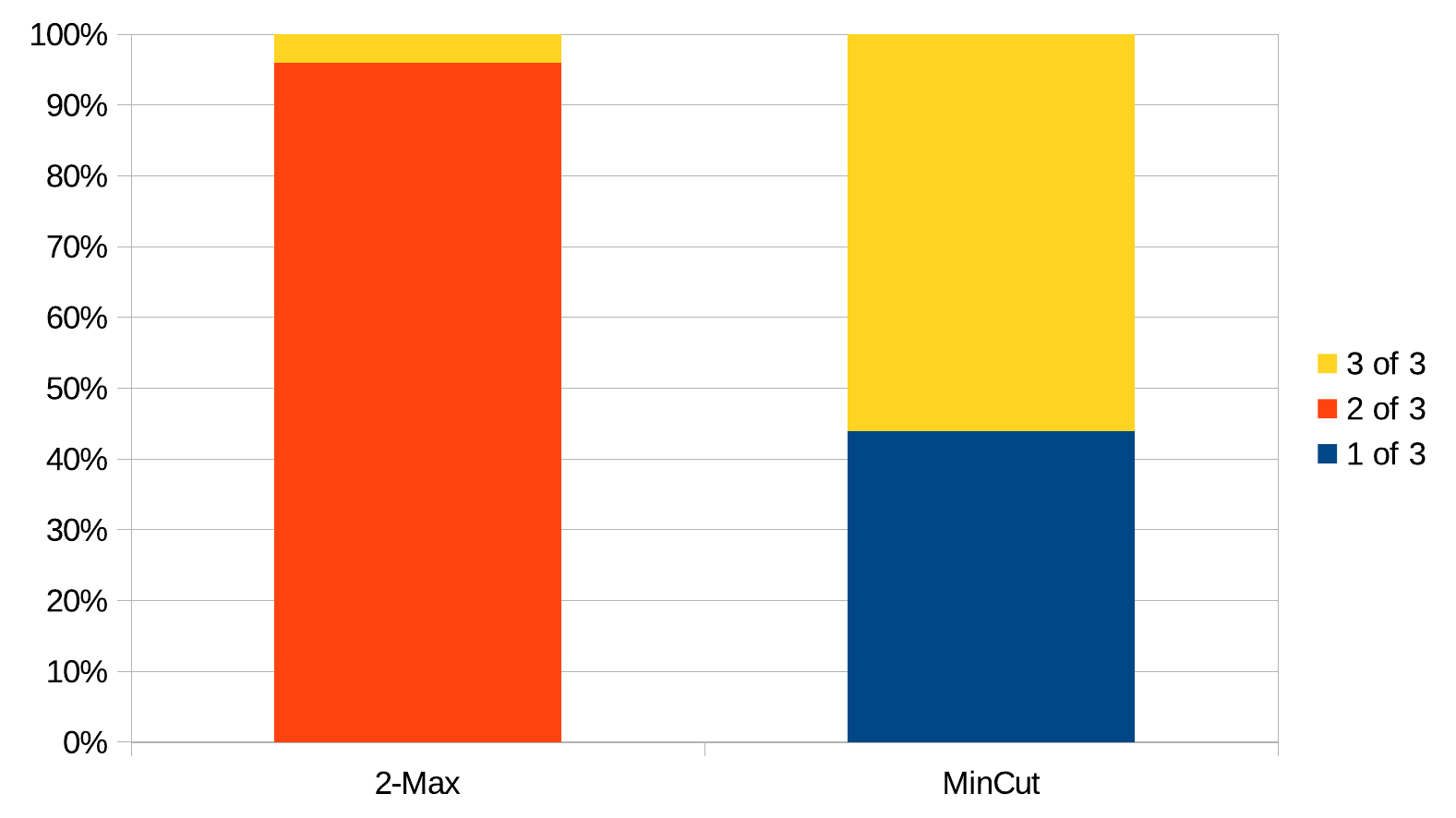}
\caption{Comparison on one specific example for users with demand 3.}
\label{hist3b}
\end{figure}

For making more precise comparison, we had to define a realistic objective
function. As both heuristics carry the base layer to every receiver,
we did not give a score for these. The objective function we chose is $2\cdot
r_2^2+1.8\cdot r_3^2+2.7\cdot r_3^3$, where $r_2^2$ is the number of receivers
in $T_2$ that received two layers, $r_3^2$ is the number of receivers
in $T_3$ that received two layers, and $r_3^3$ is the number of receivers
in $T_3$ that received three layers. The ideology behind this is the
following. A receiver with demand two is absolutely satisfied if it receives two
layers. A receiver with demand three is a little bit less satisfied if it
receives two layers, but much more happy than one receiving only one layer.
And a receiver receiving three layers is $1.5$ times satisfied than one
receiving only two.

We made  series of random inputs with varying number of nodes. For each node number we generated 10 inputs,
 calculated the scores defined above, and averaged, this score makes one point in the graphs shown.
 Implementations were carried out with LEMON C++ library \cite{lemon}.

\begin{figure}[!ht]
\centering
\includegraphics[width=4.5in]{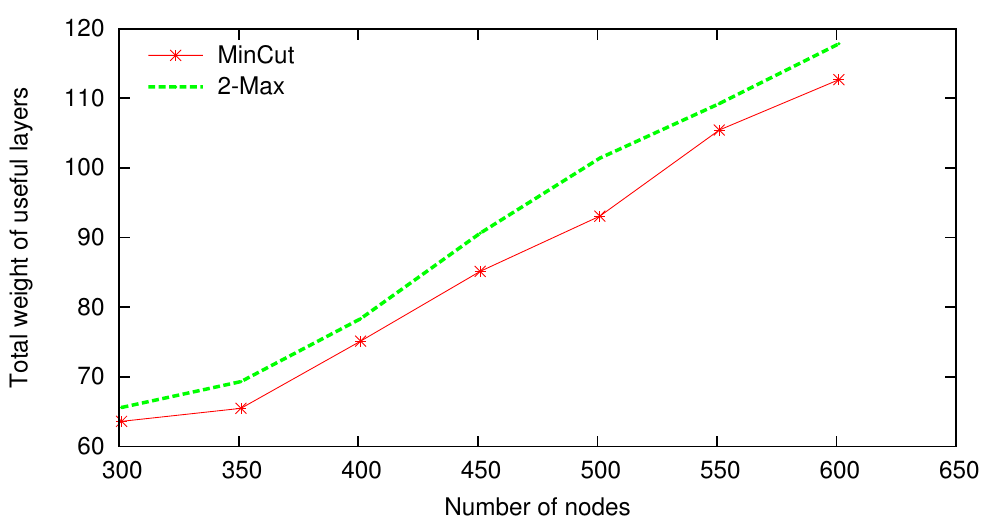}
\caption{Comparison of weighted performances with varying number of nodes.}
\label{ratio-nnodes}
\end{figure}

\section{Conclusion}

In this paper we investigated the multi-layered multicasting problem proposed
by Kim et al.\ \cite{kim}.
We proved NP-hardness for some very special cases of the problem, including
demand $\tau=(T_1,T_2)$, if we want to maximize the number of satisfied
receivers.
For two layers we gave a network coding algorithm which is optimal if the task
is to send at least one layer to every receiver and two layers to as many
receivers as possible.
For three or more layers we gave a sufficient condition for a function $f:A\to \{0,1,\ldots, k\}$ to be a feasible height function for a demand, and showed that this condition can be checked algorithmically,
and is sharp for the case of two layers.
Also, we presented a heuristic for three layers called 2-Max, which not only
ensures that
all terminals can decode the base layers, but also carries the second layer to
the maximum number of receivers.
The comparison of the heuristics analyzed
shows that on some average of the inputs our new heuristic
outperforms the other with a peremptorily chosen objective.
But on a given input it is hard to predict which heuristic gives the best
output, so we propose to run both, and choose the better
(regarding to the objective in question).

\section*{Acknowledgment}
We would like to thank J\'ulia Pap for her valuable comments about the paper.
This research was supported by grant no.\ K 109240 from
the National Development Agency of Hungary, based on a source from the Research and Technology
Innovation Fund.


\begin{thebibliography}{99}

\bibitem{mrc2}
Michelle Effros, \emph{Universal multiresolution source codes}, IEEE
  Transactions on Information Theory \textbf{47}, no.~6, pp. 2113--2129. (2001)

\bibitem{gj}
Michael R. Garey, David S. Johnson, \emph{Computers and Intractability: A Guide to the Theory of NP-Completeness},
W. H. Freeman \& Co. New York, NY, USA  ISBN:0716710447 (1979)
 


\bibitem{jaggi}
Sidharth Jaggi, Peter Sanders, Philip~A. Chou, Michelle Effros, Sebastian
  Egner, Kamal Jain, and Ludo M. G.~M. Tolhuizen, \emph{Polynomial time
  algorithms for multicast network code construction.}, IEEE Transactions on
  Information Theory, pp. 1973--1982. (2005)

\bibitem{kim}
MinJi Kim, Daniel~Enrique Lucani, Xiaomeng Shi, Fang Zhao, and Muriel
  M{\'e}dard, \emph{Network coding for multi-resolution multicast}, INFOCOM,
  pp.~1810--1818. (2010)

\bibitem{netcod2011}
Zolt\'an Kir\'aly, Erika R. Kov\'acs, \emph{A Network Coding Algorithm for Multi-Layered Video Streaming},
 International Symposium on Network Coding, NETCOD 2011 Proceedings. pp. 1-7. (2011)

\bibitem{lehman}
April Rasala Lehman, Eric Lehman,
\emph{Complexity classification of network information flow problems},
  SODA '04, pp.~142--150. (2004)

\bibitem{lemon}
LEMON, Library for Efficient Modeling and Optimization in Networks, http://lemon.cs.elte.hu/trac/lemon (2012)

\bibitem{mrc1}
Nachum Shacham, \emph{Multipoint communication by hierarchically encoded data},
  INFOCOM, pp.~2107--2114. (1992)

\bibitem{suurballe}
Suurballe, J.W.; Disjoint paths in a network., 
{\em Networks}, {\bf 1974}, 125-145.





\end{thebibliography}
\end{document}